\documentclass[a4paper]{amsart}

\usepackage{amsmath,amssymb,mathtools,mathrsfs}
\usepackage[UKenglish]{babel}
\usepackage[colorlinks=true,urlcolor=blue,citecolor=red,linkcolor=blue,linktocpage,pdfpagelabels,bookmarksnumbered,bookmarksopen]{hyperref}
\usepackage{enumerate,paralist}
\usepackage[dvips]{graphicx}
	\graphicspath{{./}{figures/}}
\allowdisplaybreaks

\DeclarePairedDelimiter{\abs}{\lvert}{\rvert}
\DeclarePairedDelimiter{\norm}{\lVert}{\rVert}

\makeatletter
\let\oldabs\abs
\def\abs{\@ifstar{\oldabs}{\oldabs*}}
\let\oldnorm\norm
\def\norm{\@ifstar{\oldnorm}{\oldnorm*}}
\makeatother

\graphicspath{{./}{figure/}}

\newcommand{\cI}{\mathscr{I}}
\newcommand{\Lip}{\operatorname{Lip}}
\newcommand{\cM}[1]{\mathcal{M}_+^{#1}(I^2)}
\newcommand{\cP}{\mathcal{P}}
\newcommand{\sP}{\mathscr{P}}
\newcommand{\Prob}{\operatorname{Prob}}
\newcommand{\R}{\mathbb{R}}
\newcommand{\cS}{\mathscr{S}}
\newcommand{\w}{\mathbf{w}}
\newcommand{\wass}[2]{W_1\left(#1,\,#2\right)}

\theoremstyle{remark}\newtheorem{remark}{Remark}[section]
\theoremstyle{plain}\newtheorem{theorem}[remark]{Theorem}
\theoremstyle{definition}
	\newtheorem{assumption}[remark]{Assumption}
	\newtheorem{definition}[remark]{Definition}

\title[Proposal of a risk model for vehicular traffic]{Proposal of a risk model for vehicular traffic: A Boltzmann-type kinetic approach}
\author{	Paolo Freguglia}
\address{Department of Engineering, Information Sciences, and Mathematics, University of L'Aquila, Via Vetoio (Coppito 1), 67100 Coppito AQ, Italy}
\email{pfreguglia@gmail.com}
\author{Andrea Tosin}
\address{Department of Mathematical Sciences ``G. L. Lagrange'', Politecnico di Torino, Corso Duca degli Abruzzi 24, 10129 Turin, Italy}
\email{andrea.tosin@polito.it}

\begin{document}

\subjclass[2010]{Primary: 90B20; Secondary: 35Q20, 35Q70}

\keywords{Fundamental and risk diagrams of traffic, safety and risk regimes, kinetic equations, stochastic microscopic interactions, Wasserstein spaces}

\begin{abstract}
This paper deals with a Boltzmann-type kinetic model describing the interplay between vehicle dynamics and safety aspects in vehicular traffic. Sticking to the idea that the macroscopic characteristics of traffic flow, including the distribution of the driving risk along a road, are ultimately generated by one-to-one interactions among drivers, the model links the personal (i.e., individual) risk to the changes of speeds of single vehicles and implements a probabilistic description of such microscopic interactions in a Boltzmann-type collisional operator. By means of suitable statistical moments of the kinetic distribution function, it is finally possible to recover macroscopic relationships between the average risk and the road congestion, which show an interesting and reasonable correlation with the well-known free and congested phases of the flow of vehicles.
\end{abstract}

\maketitle

\section{Introduction}
Road safety is a major issue in modern societies, especially in view of the constantly increasing motorisation levels across several EU and non-EU countries. Although recent studies suggest that this fact is actually correlated with a general decreasing trend of fatality rates, see e.g.,~\cite{DaCoTA2012,yannis2011TRB}, the problem of assessing quantitatively the risk in vehicular traffic, and of envisaging suitable countermeasures, remains of paramount importance.

So far, road safety has been studied mainly by means of statistical models aimed at fitting the probability distribution of the fatality rates over time~\cite{oppe1989AAP} or at forecasting road accidents using time series~\cite{abdel-aty2000AAP,miaou1993AAP}. Efforts have also been made towards the construction of safety indicators, which should allow one to classify the safety performances of different roads and to compare, on such a basis, different countries~\cite{hermans2009AAP,hermans2008AAP}. However, there is in general no agreement on which procedure, among several possible ones, is the most suited to construct a reliable indicator and, as a matter of fact, the position of a given country in the ranking turns out to be very sensitive to the indicator used. Despite this, a synthetic analysis is ultimately necessary: a mere comparison of the crash data of different countries may be misleading, therefore a more abstract and comprehensive concept of \emph{risk} has to be formulated~\cite{shen2012AAP}.

A recent report on road safety in New Zealand~\cite{KiwiRAP2012} introduces the following definitions of two types of risk:
\begin{description}
\item[Collective risk] is a measure of the total number of fatal and serious injury crashes per kilometre over a section of road, cf.~\cite[p. 13]{KiwiRAP2012};
\item[Personal risk] is a measure of the danger to each individual using the state highway being assessed, cf.~\cite[p. 14]{KiwiRAP2012}.
\end{description}
While substantially qualitative and empirical, these definitions raise nevertheless an important conceptual point, namely the fact that the risk is intrinsically \emph{multiscale}. Each driver (\emph{microscopic} scale) bears a certain personal level of danger, namely of potential risk, which, combined with the levels of danger of all other drivers, forms an emergent risk for the indistinct ensemble of road users (\emph{macroscopic} scale). Hence the large-scale tangible manifestations of the road risk originate from small-scale, often unobservable, causes. Such an argument is further supported by some psychological theories of risk perception, among which probably the most popular one in the context of vehicular traffic is the so-called \emph{risk homeostasis theory}. According to this theory, each driver possesses a certain target level of personal risk, which s/he feels comfortable with; then, at every time s/he compares their perceived risk with such a target level, adjusting their behaviour so as to reduce the gap between the two~\cite{wilde1998IP}. Actually, the risk homeostasis theory is not widely accepted, some studies rejecting it on the basis of experimental evidences, see e.g.,~\cite{evans1986RA}. The main criticism is, in essence, that the aforesaid risk regulatory mechanism of the drivers (acting similarly to the thermal homeostatic system in warm-blooded animals, whence the name of the theory) is too elementary compared to the much richer variety of possible responses, to such an extent that some paradoxical consequences are produced. For instance, the number of traffic accidents per unit time would tend to be constant independently of possible safety countermeasures, because so tends to be the personal risk per unit time. Whether one accepts or not this theory, there is a common agreement on the fact that the background of all observable manifestations of the road risk is the individual behaviour of the drivers. In this respect, conceiving a mathematical model able to explore the link between small and large scale effects acquires both a theoretical and a practical interest. In fact, if on the one hand data collection is a useful practice in order to grasp the essential trends of the considered phenomenon, on the other hand the interpretation of the data themselves, with possibly the goal of making simulations and predictions, cannot rely simply on empirical observation.

The mathematical literature offers nowadays a large variety of traffic models at all observation and representation scales, from the microscopic and kinetic to the macroscopic one, see e.g.,~\cite{piccoli2009ENCYCLOPEDIA} and references therein for a critical survey. Nevertheless, there is a substantial lack of models dedicated to the joint simulation of traffic flow and safety issues. In~\cite{moutari2013IMAJAM} the authors propose a model, which is investigated analytically in~\cite{herty2011ZAMM} and then further improved in~\cite{moutari2014CMS}, for the simulation of car accidents. The model is a macroscopic one based on the coupling of two second order traffic models, which are instantaneously defined on two disjoint adjacent portions of the road and which feature different traffic pressure laws accounting for more and less careful drivers. Car collisions are understood as the intersection of the trajectories of two vehicles driven by either type of driver. In particular, analytical conditions are provided, under which a collision occurs depending on the initial space and speed distributions of the vehicles.

In this paper, instead of modelling physical collisions among cars, we are more interested in recovering the point of view based on the concept of risk discussed at the beginning. Sticking to the idea that observable traffic trends are ultimately determined by the individual behaviour of drivers, we adopt a Boltzmann-type kinetic approach focusing on binary interactions among drivers, which are responsible for both speed changes (through instantaneous acceleration, braking, overtaking) and, consequently, also for changes in the individual levels of risk. In practice, as usual in the kinetic theory approach, we consider the time-evolution of the statistical distribution of the microscopic states of the vehicles, taking into account that such states include also the \emph{personal risk} of the drivers. Then, by extracting suitable mean quantities at equilibrium from the kinetic distribution function, we obtain some information about the macroscopic traffic trends, including the \emph{average risk} and the \emph{probability of accident} as functions of the road congestion. To some extent, these can be regarded as measures of a \emph{potential collective risk} useful to both road users and traffic governance authorities.

In more detail, the paper is organised as follows. In Section~\ref{sec:model} we present the Boltzmann-type kinetic model and specialise it to the case of a \emph{quantised} space of microscopic states, given that the vehicle speed and the personal risk can be conveniently understood as discrete variables organised in levels. In Section~\ref{sec:interactions} we detail the modelling of the microscopic interactions among the vehicles, regarding them as stochastic jump processes on the discrete state space. Indeed, the aforementioned variety of human responses suggests that a probabilistic approach is more appropriate at this stage. In Section~\ref{sec:simulations} we perform a computational analysis of the model, which leads us to define the \emph{risk diagram} of traffic parallelly to the more celebrated fundamental and speed diagrams. By means of such a diagram we display the average risk as a function of the vehicle density and we take inspiration for proposing the definition of a \emph{safety criterion} which discriminates between \emph{safety} and \emph{risk regimes} depending on the local traffic congestion. Interestingly enough, such regimes turn out to be correlated with the well-known phase transition between free and congested flow regimes also reproduced by our model. In Section~\ref{sec:conclusions} we draw some conclusions and briefly sketch research perspectives regarding the application of ideas similar to those developed in this paper to other systems of interacting particles prone to safety issues, for instance crowds. Finally, in Appendix~\ref{app:basictheo} we develop a basic well-posedness and asymptotic theory of our kinetic model in measure spaces (Wasserstein spaces), so as to ground the contents of the paper on solid mathematical bases.

\section{Boltzmann-type kinetic model with stochastic interactions}
\label{sec:model}
In this section we introduce a model based on Boltzmann-type kinetic equations, in which short-range interactions among drivers are modelled as stochastic transitions of microscopic state. This allows us to introduce the randomness of the human behaviour in the microscopic dynamics ruling the individual response to local traffic and safety conditions.

In particular, we consider the following (dimensionless) microscopic states of the drivers: the \emph{speed} $v\in [0,\,1]\subset\R$ and the \emph{personal risk} $u\in [0,\,1]\subset\R$, with $u=0$, $u=1$ standing for the lowest and highest risk, respectively. The kinetic (statistical) representation of the system is given by the \emph{(one particle) distribution function} over the microscopic states, say $f=f(t,\,v,\,u)$, $t\geq 0$ being the time variable, such that $f(t,\,v,\,u)\,dv\,du$ is the fraction of vehicles which at time $t$ travel at a speed comprised between $v$ and $v+dv$ with a personal risk between $u$ and $u+du$. Alternatively, if the distribution function $f$ is thought of as normalised with respect to the total number of vehicles, $f(t,\,v,\,u)\,dv\,du$ can be understood as the probability that a representative vehicle of the system possesses a microscopic state in $[v,\,v+dv]\times [u,\,u+du]$ at time $t$.

\begin{remark} \label{rem:u}
The numerical values of $u$ introduced above are purely conventional: they are used to mathematise the concept of personal risk but neither refer nor imply actual physical ranges. Hence they serve mostly \emph{interpretative} than strictly quantitative purposes: for instance, the definition and identification of the macroscopic risk and safety regimes of traffic, cf. Section~\ref{sec:riskdiag}. The quantitative information on these regimes will be linked to a more standard and well-defined physical quantity, such as the vehicle density.
\end{remark}

\begin{remark}
The above statistical representation does not include the space coordinate among the variables which define the microscopic state of the vehicles. This is because we are considering a simplified setting, in which the distribution of the vehicles along the road is supposed to be \emph{homogeneous}. While being a rough physical approximation, this assumption nonetheless allows us to focus on the interaction dynamics among vehicles, which feature mainly speed variations, here further linked to variations of the personal risk. Hence the two microscopic variables introduced above are actually the most relevant ones for constructing a minimal mathematical model which describes traffic dynamics as a result of concurrent mechanical and behavioural effects.
\end{remark}

\subsection{Use of the distribution function for computing observable quantities}
If the distribution function $f$ is known, several statistics over the microscopic states of the vehicles can be computed. Such statistics provide macroscopic observable quantities related to both the traffic conditions and the risk along the road.

We recall in particular some of them, which will be useful in the sequel:
\begin{itemize}
\item The \emph{density} of vehicles at time $t$, denoted $\rho=\rho(t)$, which is defined as the zeroth order moment of $f$ with respect to both $v$ and $u$:
$$ \rho(t):=\int_0^1\int_0^1 f(t,\,v,\,u)\,dv\,du. $$
Throughout the paper we will assume $0\leq\rho\leq 1$, where $\rho=1$ represents the maximum (dimensionless) density of vehicles that can be locally accommodated on the road.
\item The \emph{average flux} of vehicles at time $t$, denoted $q=q(t)$, which is defined as the first order moment of the speed distribution, the latter being the marginal of $f$ with respect to $u$. Hence:
\begin{equation}
	q(t)=\int_0^1 v\left(\int_0^1 f(t,\,v,\,u)\,du\right)\,dv.
	\label{eq:ave_flux}
\end{equation}
\item The \emph{mean speed} of vehicles at time $t$, denoted $V=V(t)$, which is defined from the usual relationship between $\rho$ and $q$, i.e., $q=\rho V$, whence
\begin{equation}
	V(t)=\frac{q(t)}{\rho(t)}.
	\label{eq:mean_speed}
\end{equation}
\item The \emph{statistical distribution of the risk}, say $\varphi=\varphi(t,\,u)$, namely the marginal of $f$ with respect to $v$:
\begin{align}
	\begin{aligned}[t]
		\varphi(t,\,u):=\int_0^1f(t,\,v,\,u)\,dv,
	\end{aligned}
	\label{eq:risk_distr}
\end{align}
which is such that $\varphi(t,\,u)\,du$ is the number of vehicles which, at time $t$, bear a personal risk comprised between $u$ and $u+du$ (regardless of their speed). Using $\varphi$ we obtain the \emph{average risk}, denoted $U(t)$, along the road at time $t$ as:
\begin{align}
	\begin{aligned}[t]
		U(t):=\frac{1}{\rho(t)}\int_0^1u\varphi(t,\,u)\,du=\frac{1}{\rho(t)}\int_0^1 u\left(\int_0^1 f(t,\,v,\,u)\,dv\right)\,du.
	\end{aligned}
	\label{eq:ave_risk}
\end{align}
Notice that $\int_0^1\varphi(t,\,u)\,du=\rho(t)$, which explains the coefficient $\frac{1}{\rho(t)}$ in~\eqref{eq:ave_risk}.
\end{itemize}

\subsection{Evolution equation for the distribution function}
A mathematical mo\-del consists in an evolution equation for the distribution function $f$, derived consistently with the principles of the kinetic theory of vehicular traffic.

In our spatially homogeneous setting, the time variation of the number of vehicles with speed $v$ and personal risk $u$ is only due to short-range interactions, which cause acceleration and braking. Since the latter depend ultimately on people's driving style, they cannot be modelled by appealing straightforwardly to standard mechanical principles. Hence, in the following, speed and risk transitions will be regarded as correlated stochastic processes. This way both the subjectivity of the human behaviour and the interplay between mechanical and behavioural effects will be taken into account.

In formulas we write
$$ \partial_tf=Q^{+}(f,\,f)-fQ^{-}(f), $$
where:
\begin{itemize}
\item $Q^{+}(f,\,f)$ is a bilinear \emph{gain operator}, which counts the average number of interactions per unit time originating new vehicles with post-interaction state $(v,\,u)$;
\item $Q^{-}(f)$ is a linear \emph{loss operator}, which counts the average number of interactions per unit time causing vehicles with pre-interaction state $(v,\,u)$ to change either the speed or the personal risk.
\end{itemize}
Let us introduce the following compact notations: $\w:=(v,\,u)$, $I=[0,\,1]\subset\R$. Then the expression of the gain operator is as follows (see e.g.,~\cite{tosin2009AML}):
\begin{equation}
	Q^{+}(f,\,f)(t,\,\w)=\iint_{I^4}\eta(\w_\ast,\,\w^\ast)\cP(\w_\ast\to\w\,\vert\,\w^\ast,\,\rho)f(t,\,\w_\ast)f(t,\,\w^\ast)\,d\w_\ast\,d\w^\ast
	\label{eq:gain}
\end{equation}
where $\w_\ast,\,\w^\ast$ are the pre-interaction states of the two vehicles which interact and:
\begin{itemize}
\item $\eta(\w_\ast,\,\w^\ast)>0$ is the \emph{interaction rate}, i.e., the frequency of interaction between a vehicle with microscopic state $\w_\ast$ and one with microscopic state $\w^\ast$;
\item $\cP(\w_\ast\to\w\,\vert\,\w^\ast,\,\rho)\geq 0$ is the \emph{transition probability distribution}. More precisely, $\cP(\w_\ast\to\w\,\vert\,\w^\ast,\,\rho)\,d\w$ is the probability that the vehicle with microscopic state $\w_\ast$ switches to a microscopic state contained in the elementary volume $d\w$ of $I^2$ centred in $\w$ because of an interaction with the vehicle with microscopic state $\w^\ast$. Conditioning by the density $\rho$ indicates that, as we will see later (cf. Section~\ref{sec:interactions}), binary interactions are influenced by the local macroscopic state of the traffic.

For fixed pre-interaction states the following property holds:
\begin{equation}
	\int_{I^2}\cP(\w_\ast\to\w\,\vert\,\w^\ast,\,\rho)\,d\w=1 \qquad \forall\,\w_\ast,\,\w^\ast\in I^2,\ \forall\,\rho\in [0,\,1].
	\label{eq:tabgames}
\end{equation}
\end{itemize}

Likewise, the expression of the loss operator is as follows (see again~\cite{tosin2009AML}):
$$ Q^{-}(f)(t,\,\w)=\int_{I^2}\eta(\w,\,\w^\ast)f(t,\,\w^\ast)\,d\w^\ast. $$
On the whole, the loss term $fQ^{-}(f)$ can be derived from the gain term~\eqref{eq:gain} by assuming that the first vehicle already holds the state $\w$ and counting on average all interactions which, in the unit time, can make it switch to whatever else state. One has:
$$ f(t,\,\w)Q^{-}(f)(t,\,\w)=\iint_{I^4}\eta(\w,\,\w^\ast)\cP(\w\to\w'\,\vert\,\w^\ast,\,\rho)f(t,\,\w)f(t,\,\w^\ast)\,d\w'\,d\w^\ast, $$
then property~\eqref{eq:tabgames} gives the above expression for $Q^{-}(f)$.

Putting together the terms introduced so far, and assuming, for the sake of simplicity, that $\eta(\w_\ast,\,\w^\ast)=1$ for all $\w_\ast,\,\w^\ast\in I^2$, we finally obtain the following integro-differential equation for $f$:
\begin{equation}
	\partial_tf=\iint_{I^4}\cP(\w_\ast\to\w\,\vert\,\w^\ast,\,\rho)f(t,\,\w_\ast)f(t,\,\w^\ast)\,d\w_\ast\,d\w^\ast-\rho f.
	\label{eq:kinetic_model}
\end{equation}
Notice that, with constant interaction rates, the loss term is directly proportional to the vehicle density. Moreover, owing to property~\eqref{eq:tabgames}, it results
$$ \int_{I^2}\Bigl(Q^{+}(f,\,f)(t,\,\w)-\rho f(t,\,\w)\Bigr)\,d\w=0,  $$
therefore integrating~\eqref{eq:kinetic_model} with respect to $\w$ gives that such a density is actually constant in time (conservation of mass).

For the study of basic qualitative properties of~\eqref{eq:kinetic_model} the reader may refer to Appendix~\ref{app:basictheo}, where we tackle the well-posedness of the Cauchy problem associated with~\eqref{eq:kinetic_model} in the frame of \emph{measure-valued differential equations} in Wasserstein spaces. Such a theoretical setting, which is more abstract than the one usually considered in the literature for similar equations, see e.g.,~\cite[Appendix A]{bellomo2015NHM} and~\cite{pucci2014DCDSS}, is here motivated by the specialisation of the model that we are going to discuss in the next section.

\subsection{Discrete microscopic states}
\label{sec:discrete}
For practical reasons, it may be convenient to think of the microscopic states $v,\,u$ as \emph{quantised} (i.e., distributed over a set of \emph{discrete}, rather than continuous, values). This is particularly meaningful for the personal risk $u$, which is a non-mechanical quantity naturally meant in levels, but may be reasonable also for the speed $v$, see e.g.,~\cite{delitala2007M3AS,fermo2013SIAP}, considering that the cruise speed of a vehicle tends to be mostly piecewise constant in time, with rapid transitions from one speed level to another.

In the state space $I^2=[0,\,1]\times [0,\,1]\subset\R^2$ we consider therefore a lattice of microscopic states $\{\w_{ij}\}$, with $\w_{ij}=(v_i,\,u_j)$ and, say, $i=1,\,\dots,\,n$, $j=1,\,\dots,\,m$. For instance, if the lattice is uniformly spaced we have
$$ v_i=\frac{i-1}{n-1}, \qquad u_j=\frac{j-1}{m-1}, $$
with in particular $v_1=u_1=0$, $v_n=u_m=1$, and $v_i<v_{i+1}$ for all $i=1,\,\dots,\,n-1$, $u_j<u_{j+1}$ for all $j=1,\,\dots,\,m-1$.

Proceeding at first in a formal fashion, over such a lattice we postulate the following form of the kinetic distribution function:
\begin{equation}
	f(t,\,\w)=\sum_{i=1}^{n}\sum_{j=1}^{m}f_{ij}(t)\delta_{\w_{ij}}(\w),
	\label{eq:discr_f}
\end{equation}
where $\delta_{\w_{ij}}(\w)=\delta_{v_i}(v)\otimes\delta_{u_j}(u)$ is the two-dimensional Dirac delta function, while $f_{ij}(t)$ is the fraction of vehicles which, at time $t$, travel at speed $v_i$ with personal risk $u_j$ (or, depending on the interpretation given to $f$, it is the probability that a representative vehicle of the system possesses the microscopic state $\w_{ij}=(v_i,\,u_j)$ at time $t$). In order to specialise~\eqref{eq:kinetic_model} to the kinetic distribution function~\eqref{eq:discr_f}, we rewrite it in weak form by multiplying by a test function $\phi\in C(I^2)$ and integrating over $I^2$:
\begin{align*}
	\frac{d}{dt} & \int_{I^2}\phi(\w)f(t,\,\w)\,d\w \\
	&= \iint_{I^4}\left(\int_{I^2}\phi(\w)\cP(\w_\ast\to\w\,\vert\,\w^\ast,\,\rho)\,d\w\right)
		f(t,\,\w_\ast)f(t,\,\w^\ast)\,d\w_\ast\,d\w^\ast \\
	&\phantom{=} -\rho\int_{I^2}\phi(\w)f(t,\,\w)\,d\w;
\end{align*}
next we read $f(t,\,\w)\,d\w$ as an integration measure, not necessarily regular with respect to Lebesgue, and from~\eqref{eq:discr_f} we get:
\begin{align*}
	\sum_{i=1}^{n} & \sum_{j=1}^{m}f'_{ij}(t)\phi(\w_{ij}) \\
	&= \sum_{i_\ast,i^\ast=1}^{n}\sum_{j_\ast,j^\ast=1}^{m}
		\left(\int_{I^2}\phi(\w)\cP(\w_{i_\ast j_\ast}\to\w\,\vert\,\w_{i^\ast j^\ast},\,\rho)\,d\w\right)
			f_{i_\ast j_\ast}(t)f_{i^\ast j^\ast}(t) \\
	&\phantom{=} -\sum_{i=1}^{n}\sum_{j=1}^{m}\rho f_{ij}(t)\phi(\w_{ij}).
\end{align*}
In view of the quantisation of the state space, the transition probability distribution must have a structure comparable to~\eqref{eq:discr_f}, i.e., it must be a discrete probability distribution over the post-interaction state $\w$. Hence we postulate:
$$ \cP(\w_{i_\ast j_\ast}\to\w\,\vert\,\w_{i^\ast j^\ast},\,\rho)
	=\sum_{i=1}^{n}\sum_{j=1}^{m}\cP_{i_\ast j_\ast,i^\ast j^\ast}^{ij}(\rho)\delta_{\w_{ij}}(\w), $$
where $\cP_{i_\ast j_\ast,i^\ast j^\ast}^{ij}(\rho)\in [0,\,1]$ is the probability that the vehicle with microscopic state $\w_{i_\ast j_\ast}$ jumps to the microscopic state $\w_{ij}$ because of an interaction with the vehicle with microscopic state $\w_{i^\ast j^\ast}$, given the local traffic congestion $\rho$. Plugging this into the equation above yields
\begin{align*}
	\sum_{i=1}^{n} & \sum_{j=1}^{m}f'_{ij}(t)\phi(\w_{ij}) \\
	&= \sum_{i=1}^{n}\sum_{j=1}^{m}\left(\sum_{i_\ast,i^\ast=1}^{n}
		\sum_{j_\ast,j^\ast=1}^{m}\cP_{i_\ast j_\ast,i^\ast j^\ast}^{ij}(\rho)f_{i_\ast j_\ast}(t)f_{i^\ast j^\ast}(t)
			-\rho f_{ij}(t)\right)\phi(\w_{ij}),
\end{align*}
whence finally, owing to the arbitrariness of $\phi$, we obtain
\begin{equation}
	f'_{ij}=\sum_{i_\ast,i^\ast=1}^{n}\sum_{j_\ast,j^\ast=1}^{m}\cP_{i_\ast j_\ast,i^\ast j^\ast}^{ij}(\rho)f_{i_\ast j_\ast}f_{i^\ast j^\ast}-\rho f_{ij}.
	\label{eq:discr_kinetic_model}
\end{equation}

\begin{remark} \label{rem:prop_discr_model}
This discrete-state kinetic-type equation has been studied in the literature, see e.g.,~\cite{colasuonno2013CM,delitala2007M3AS}. In particular, it has been proved to admit smooth solutions $t\mapsto f_{ij}(t)$, which are unique and nonnegative for prescribed nonnegative initial data $f_{ij}(0)$ and, in addition, preserve the total mass $\sum_{i=1}^{n}\sum_{j=1}^{m}f_{ij}(t)$ in time.
\end{remark}

The arguments above can be made rigorous by appealing to the theory for~\eqref{eq:kinetic_model} developed in Appendix~\ref{app:basictheo}. In particular, we can state the following result:
\begin{theorem} \label{theo:cont-discr_link}
Let the transition probability distribution have the form
$$ \cP(\w_\ast\to\w\,\vert\,\w^\ast,\,\rho)
	=\sum_{i=1}^{n}\sum_{j=1}^{m}\cP^{ij}(\w_\ast,\,\w^\ast,\,\rho)\delta_{\w_{ij}}(\w), $$
where the mapping $(\w_\ast,\,\w^\ast,\,\rho)\mapsto\cP^{ij}(\w_\ast,\,\w^\ast,\,\rho)$ is Lipschitz continuous for all $i,\,j$, i.e., there exists a constant $\Lip(\cP^{ij})>0$ such that
\begin{multline*}
	\abs{\cP^{ij}(\w_{\ast 2},\,\w^\ast_2,\,\rho_2)-\cP^{ij}(\w_{\ast 2},\,\w^\ast_2,\,\rho_2)} \\
		\leq\Lip(\cP^{ij})\Bigl(\abs{\w_{\ast 2}-\w_{\ast 1}}+\abs{\w^\ast_2-\w^\ast_1}+\abs{\rho_2-\rho_1}\Bigr)
\end{multline*}
for all $\w_{\ast 1},\,\w_{\ast 2},\,\w^\ast_1,\,\w^\ast_2\in I^2$, $\rho\in [0,\,1]$.

Let moreover
$$ f_0(\w)=\sum_{i=1}^{n}\sum_{j=1}^{m}f^0_{ij}\delta_{\w_{ij}}(\w) $$
be a prescribed kinetic distribution function at time $t=0$ over the lattice of microscopic states $\{\w_{ij}\}\subset I^2$, such that
$$ f^0_{ij}\geq 0\ \forall\,i,\,j, \qquad \sum_{i=1}^{n}\sum_{j=1}^{m}f^0_{ij}=\rho\in [0,\,1]. $$

Set $\cP_{i_\ast j_\ast,i^\ast j^\ast}^{ij}(\rho):=\cP^{ij}(\w_{i_\ast j_\ast},\,\w_{i^\ast j^\ast},\,\rho)$. Then the corresponding unique solution to~\eqref{eq:kinetic_model} is~\eqref{eq:discr_f} with coefficients $f_{ij}(t)$ given by~\eqref{eq:discr_kinetic_model} along with the initial conditions $f_{ij}(0)=f^0_{ij}$. In addition, it depends continuously on the initial datum as stated by Theorem~\ref{theo:cont_dep}.
\end{theorem}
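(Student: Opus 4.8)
The plan is to combine the abstract well-posedness theory for~\eqref{eq:kinetic_model} developed in Appendix~\ref{app:basictheo} with an explicit construction. Since that theory provides \emph{uniqueness} of the measure-valued solution issued from a given datum, it suffices to exhibit \emph{one} solution of the claimed atomic form~\eqref{eq:discr_f}; uniqueness then forces the identification, while the continuous dependence on $f_0$ is inherited directly from Theorem~\ref{theo:cont_dep}.

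The first thing to check is that the hypotheses of the appendix are met by the present data. By~\eqref{eq:tabgames} the kernel $\w\mapsto\cP(\w_\ast\to\w\,\vert\,\w^\ast,\,\rho)=\sum_{i,j}\cP^{ij}(\w_\ast,\,\w^\ast,\,\rho)\delta_{\w_{ij}}(\w)$ is a probability measure supported on the fixed lattice; testing against $1$-Lipschitz functions normalised to vanish at one atom gives
\begin{equation*}
  \wass{\sum_{i,j}\cP^{ij}(\w_{\ast 1},\,\w^\ast_1,\,\rho_1)\delta_{\w_{ij}}}{\sum_{i,j}\cP^{ij}(\w_{\ast 2},\,\w^\ast_2,\,\rho_2)\delta_{\w_{ij}}}\leq\operatorname{diam}(I^2)\sum_{i,j}\abs{\cP^{ij}(\w_{\ast 1},\,\w^\ast_1,\,\rho_1)-\cP^{ij}(\w_{\ast 2},\,\w^\ast_2,\,\rho_2)},
\end{equation*}
so the scalar Lipschitz bounds on the $\cP^{ij}$ make $(\w_\ast,\,\w^\ast,\,\rho)\mapsto\cP(\w_\ast\to\cdot\,\vert\,\w^\ast,\,\rho)$ Lipschitz into the Wasserstein space with constant $\operatorname{diam}(I^2)\sum_{i,j}\Lip(\cP^{ij})$, which is the Lipschitz-type hypothesis underlying Theorem~\ref{theo:cont_dep}. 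The datum $f_0$ is a finite nonnegative combination of Dirac masses with total mass $\rho\in[0,\,1]$, hence an admissible element of the measure space there (rescaled by $\rho$ should that theory be phrased for probability measures). Theorem~\ref{theo:cont_dep} then yields a unique solution $t\mapsto f(t,\cdot)$ to~\eqref{eq:kinetic_model} starting from $f_0$, together with the asserted continuous dependence.

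It then remains to build a solution of the stated form and invoke uniqueness. The finite ODE system~\eqref{eq:discr_kinetic_model} with data $f_{ij}(0)=f^0_{ij}$ has a locally Lipschitz (quadratic-plus-linear) right-hand side, hence a unique maximal solution; the gain term being nonnegative on each face $\{f_{ij}=0\}$, the nonnegative orthant is invariant, and summing~\eqref{eq:discr_kinetic_model} over $i,\,j$ with $\sum_{i,j}\cP_{i_\ast j_\ast,i^\ast j^\ast}^{ij}(\rho)=1$ gives $\frac{d}{dt}\sum_{i,j}f_{ij}=\bigl(\sum_{i,j}f_{ij}\bigr)^2-\rho\sum_{i,j}f_{ij}$, so $\sum_{i,j}f_{ij}(t)\equiv\rho$ and the solution is global (this is the content of Remark~\ref{rem:prop_discr_model}). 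Putting $g(t,\cdot):=\sum_{i,j}f_{ij}(t)\delta_{\w_{ij}}$, the curve $t\mapsto g(t,\cdot)$ is $C^1$ — hence continuous in $W_1$ — with constant mass $\rho$, so it lies in the solution class of the appendix; and retracing the weak-form computation of Section~\ref{sec:discrete}, which is now an exact identity because it only uses that Dirac masses pair continuously with test functions and that $\cP$ has the postulated atomic form, shows that $g$ solves~\eqref{eq:kinetic_model} with $g(0,\cdot)=f_0$. By the uniqueness part of Theorem~\ref{theo:cont_dep} we conclude $g\equiv f$, which is the claim.

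The only genuinely delicate point is the matching carried out in the second step: translating the \emph{scalar} Lipschitz hypotheses on the maps $\cP^{ij}$ into the \emph{measure-valued} Lipschitz regularity required by the abstract theorem, and confirming that a purely atomic datum (and solution) fits within its assumptions on the class and normalisation of admissible measures. Everything else is either elementary finite-dimensional ODE theory or a verbatim re-reading of the formal derivation already carried out.
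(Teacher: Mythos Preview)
Your proof is correct and follows essentially the same strategy as the paper's: verify that the atomic transition kernel satisfies the Wasserstein--Lipschitz hypothesis of the appendix (Assumption~\ref{ass:Lip_P}), note that $f_0\in\cM{\rho}$, invoke the abstract well-posedness result, and then identify the unique solution with the discrete ansatz~\eqref{eq:discr_f} built from the ODE system~\eqref{eq:discr_kinetic_model} (whose global existence and nonnegativity the paper delegates to Remark~\ref{rem:prop_discr_model}, while you spell them out). The only slip is bibliographic: existence and uniqueness are provided by Theorem~\ref{theo:exist_uniqueness}, not Theorem~\ref{theo:cont_dep}, which supplies only the continuous-dependence estimate.
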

\begin{proof}
The given transition probability distribution satisfies Assumption~\ref{ass:Lip_P}, in fact
\begin{align*}
	& \wass{\cP(\w_{\ast 1}\to\cdot\,\vert\,\w^\ast_1,\,\rho_1)}{\cP(\w_{\ast 2}\to\cdot\,\vert\,\w^\ast_2,\,\rho_2)} \\
	&\qquad =\sup_{\varphi\in C_{b,1}(I^2)\cap\Lip_1(I^2)}\sum_{i=1}^{n}\sum_{j=1}^{m}\varphi(\w_{ij})
		\left(\cP^{ij}(\w_{\ast 2},\,\w^\ast_2,\,\rho_2)-\cP^{ij}(\w_{\ast 1},\,\w^\ast_1,\,\rho_1)\right) \\
	&\qquad \leq\left(\sum_{i=1}^{n}\sum_{j=1}^{m}\Lip(\cP^{ij})\right)
		\Bigl(\abs{\w_{\ast 2}-\w_{\ast 1}}+\abs{\w^\ast_2-\w^\ast_1}+\abs{\rho_2-\rho_1}\Bigr).
\end{align*}
Furthermore, $f_0\in\cM{\rho}$. Then, owing to Theorem~\ref{theo:exist_uniqueness}, we can assert that the Cauchy problem associated with~\eqref{eq:kinetic_model} admits a unique mild solution\footnote{For the definition of mild solution to~\eqref{eq:kinetic_model} we refer the reader to~\eqref{eq:mild} in Appendix~\ref{app:basictheo}.}.

The calculations preceding this theorem show that if the $f_{ij}(t)$'s satisfy~\eqref{eq:discr_kinetic_model} then~\eqref{eq:discr_f} is indeed such a solution, considering also that it is nonnegative (cf. Remark~\ref{rem:prop_discr_model}) and matches the initial condition $f_0$.
\end{proof}

Theorem~\ref{theo:cont-discr_link} requires the mapping $(\w_\ast,\,\w^\ast,\,\rho)\mapsto\cP^{ij}(\w_\ast,\,\w^\ast,\,\rho)$ to be Lipschitz continuous in $I^2\times I^2\times [0,\,1]$ but the solution~\eqref{eq:discr_f} depends ultimately only on the values $\cP^{ij}(\w_{i_\ast j_\ast},\,\w_{i^\ast j^\ast},\,\rho)$, cf.~\eqref{eq:discr_kinetic_model}. Therefore, when constructing specific models, we can confine ourselves to specifying the values $\cP_{i_\ast j_\ast,i^\ast j^\ast}^{ij}(\rho)$, taking for granted that they can be variously extended to points $(\w_\ast,\,\w^\ast)\ne (\w_{i_\ast j_\ast},\,\w_{i^\ast j^\ast})$ in a Lipschitz continuous way.

\section{Modelling microscopic interactions}
\label{sec:interactions}
From now on, we will systematically refer to the discrete-state setting ruled by~\eqref{eq:discr_kinetic_model}. In order to describe the interactions among the vehicles, it is necessary to model the transition probabilities $\cP_{i_\ast j_\ast,i^\ast j^\ast}^{ij}(\rho)$ associated with the jump processes over the lattice of discrete microscopic states.

As a first step, we propose the following factorisation:
\begin{align*}
	\cP_{i_\ast j_\ast,i^\ast j^\ast}^{ij}(\rho) &=
		\Prob(u_{j_\ast}\to u_j\,\vert\,v_{i_\ast},\,v_{i^\ast},\,\rho)\cdot\Prob(v_{i_\ast}\to v_i\,\vert\,v_{i^\ast},\,\rho) \\
	& =: (\cP')_{i_\ast j_\ast,i^\ast}^{j}(\rho)\cdot(\cP'')_{i_\ast i^\ast}^{i}(\rho),
\end{align*}
which implies that changes in the personal risk (first term at the right-hand side) depend on the current speeds of the interacting pairs while the driving style (i.e., the way in which the speed changes, second term at the right-hand side) is not directly influenced by the current personal risk. In a sense, we are interpreting the change of personal risk as a function of the driving conditions, however linked to the \emph{subjectivity} of the drivers and hence described in probability. By subjectivity we mean the fact that different drivers may not respond in the same way to the same conditions. More advanced models may account for a joint influence of speed and risk levels on binary interactions, but for the purposes of the present paper the approximation above appears to be satisfactory.

As a second step, we detail the transition probabilities $(\cP')_{i_\ast j_\ast,i^\ast}^{j}(\rho)$, $(\cP'')_{i_\ast i^\ast}^{i}(\rho)$ just introduced. It is worth stressing that they will be mainly inspired by a prototypical analysis of the driving styles. In particular, they will be parameterised by the vehicle density $\rho\in [0,\,1]$ so as to feed back the global traffic conditions to the local interaction rules. Other external \emph{objective} factors which may affect the flow of vehicles and the personal risk, such as e.g., weather or road conditions (number of lanes, number of directions of travel, type of wearing course), will be summarised by a parameter $\alpha\in [0,\,1]$, whose low, resp. high, values stand for poor, resp. good, conditions.

\begin{remark}
The interpretation of $\alpha$ is conceptually analogous to that of $u$ discussed in Remark~\ref{rem:u}: its numerical values do not refer to actual physical (measured) ranges but serve to convey, in mathematical terms, the influence of external conditions on binary interactions.
\end{remark}

\subsection{Risk transitions}
\label{sec:risk_trans}
In modelling the risk transition probability
$$ (\cP')_{i_\ast j_\ast,i^\ast}^{j}(\rho)=\Prob(u_{j_\ast}\to u_j\,\vert\,v_{i_\ast},v_{i^\ast},\,\rho) $$
we consider two cases, depending on whether the vehicle with state $(v_{i_\ast},\,u_{j_\ast})$ interacts with a faster or a slower leading vehicle with speed $v_{i^\ast}$.
\begin{itemize}
\item If $v_{i_\ast}\leq v_{i^\ast}$ we set
$$ (\cP')_{i_\ast j_\ast,i^\ast}^{j}(\rho)=\alpha\rho\delta_{j,\max\{1,\,j_\ast-1\}}+(1-\alpha\rho)\delta_{j,j_\ast}, $$
the symbol $\delta$ denoting here the Kronecher's delta. In practice, we assume that the interaction with a faster leading vehicle can reduce the personal risk with probability $\alpha\rho$, which raises in high traffic congestion and good environmental conditions. The rationale is that the headway from a faster leading vehicle increases, which reduces the risk of collision especially when vehicles are packed (high $\rho$) or when speeds are presumably high (good environmental conditions, i.e., high $\alpha$). Alternatively, after the interaction the personal risk remains the same with the complementary probability.
\item If $v_{i_\ast}>v_{i^\ast}$ we set
$$ (\cP')_{i_\ast j_\ast,i^\ast}^{j}(\rho)=\delta_{j,\min\{j_\ast+1,\,m\}}, $$
i.e., we assume that the interaction with a slower leading vehicle can only increase the personal risk because the headway is reduced or overtaking is induced (see below).
\end{itemize}

\subsection{Speed transitions}
\label{sec:speed_trans}
In modelling the speed transition probability
$$ (\cP'')_{i_\ast i^\ast}^{i}(\rho)=\Prob(v_{i_\ast}\to v_i\,\vert\,v_{i^\ast},\,\rho) $$
we refer to~\cite{puppo2016CMS}, where the following three cases are considered:
\begin{itemize}
\item If $v_{i_\ast}<v_{i^\ast}$ then
$$ (\cP'')_{i_\ast i^\ast}^{i}(\rho)=\alpha(1-\rho)\delta_{i,i_\ast+1}+(1-\alpha(1-\rho))\delta_{i,i_\ast}, $$
i.e., the vehicle with speed $v_{i_\ast}$ emulates the leading one with speed $v_{i^\ast}$ by accelerating to the next speed with probability $\alpha(1-\rho)$. This probability increases if environmental conditions are good and traffic is not too much congested. Otherwise, the speed remains unchanged with complementary probability.
\item If $v_{i_\ast}>v_{i^\ast}$ then
$$ (\cP'')_{i_\ast i^\ast}^{i}(\rho)=\alpha(1-\rho)\delta_{i,i_\ast}+(1-\alpha(1-\rho))\delta_{i,i^\ast}, $$
i.e., the vehicle with speed $v_{i_\ast}$ maintains its speed with probability $\alpha(1-\rho)$. The rationale is that if environmental conditions are good enough or traffic is sufficiently uncongested then it can overtake the slower leading vehicle with speed $v_{i^\ast}$. Otherwise, it is forced to slow down to the speed $v_{i^\ast}$ and to queue up, which happens with the complementary probability.
\item If $v_{i_\ast}=v_{i^\ast}$ then
\begin{align*}
	(\cP'')_{i_\ast i^\ast}^{i}(\rho) &= (1-\alpha)\rho\delta_{i,\max\{1,\,i_\ast-1\}}+\alpha(1-\rho)\delta_{i,\min\{i_\ast+1,\,n\}} \\
	& \phantom{=} +(1-\alpha-(1-2\alpha)\rho)\delta_{i,i_\ast}.
\end{align*}
In this case there are three possible outcomes of the interaction: if environmental conditions are poor and traffic is congested the vehicle with speed $v_{i_\ast}$ slows down with probability $(1-\alpha)\rho$; if, instead, environmental conditions are good and traffic is light then it accelerates to the next speed (because e.g., it overtakes the leading vehicle) with probability $\alpha(1-\rho)$; finally, it can also remain with its current speed with a probability which complements the sum of the previous two.
\end{itemize}

\section{Case studies}
\label{sec:simulations}
\subsection{Fundamental diagrams of traffic}
\label{sec:funddiag}
Model~\eqref{eq:discr_kinetic_model} can be used to investigate the long-term macroscopic dynamics resulting from the small-scale interactions among vehicles discussed in the previous section. Such dynamics are summarised by the well-known \emph{fundamental} and \emph{speed diagrams} of traffic, see e.g.,~\cite{li2011TRR}, which express the average flux and mean speed of the vehicles at equilibrium, respectively, as functions of the vehicle density along the road. This information, typically obtained from experimental measurements~\cite{bonzani2003MCM,kerner2004BOOK}, is here studied at a theoretical level in order to discuss qualitatively the impact of the driving style on the macroscopically observable traffic trends.

In Appendix~\ref{app:basictheo} we give sufficient conditions for the existence, uniqueness, and global attractiveness of equilibria $f_\infty\in\cM{\rho}$ of~\eqref{eq:kinetic_model}, cf. Theorems~\ref{theo:equilibria},~\ref{theo:attractiveness}. Here we claim, in particular, that if the transition probability distribution $\cP$ has the special form discussed in Theorem~\ref{theo:cont-discr_link} then $f_\infty$ is actually a discrete-state distribution function.

\begin{theorem} \label{theo:discr_equilibria}
Fix $\rho\in [0,\,1]$ and let $\cP(\w_\ast\to\w\,\vert\,\w^\ast,\,\rho)$ be like in Theorem~\ref{theo:cont-discr_link}. Assume moreover that $\Lip(\cP)<\frac{1}{2}$ (cf. Assumption~\ref{ass:Lip_P}). Then the unique equilibrium distribution function $f_\infty\in\cM{\rho}$, which is also globally attractive, has the form $f_\infty(\w)=\sum_{i=1}^{n}\sum_{j=1}^{m}f^\infty_{ij}\delta_{\w_{ij}}(\w)$ with the coefficients $f^\infty_{ij}$ satisfying
\begin{equation}
	f^\infty_{ij}=\frac{1}{\rho}\sum_{i_\ast,i^\ast=1}^{n}\sum_{j_\ast,j^\ast=1}^{m}\cP_{i_\ast j_\ast,i^\ast j^\ast}^{ij}(\rho)
		f^\infty_{i_\ast j_\ast}f^\infty_{i^\ast j^\ast},
	\quad
	\begin{array}{l}
		i=1,\,\dots,\,n \\
		j=1,\,\dots,\,m.
	\end{array}
	\label{eq:discr_equilibria}
\end{equation}
\end{theorem}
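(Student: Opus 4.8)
The plan is to deduce everything from the abstract well-posedness and asymptotic theory of Appendix~\ref{app:basictheo} together with the structural result of Theorem~\ref{theo:cont-discr_link}, reducing the statement to elementary finite-dimensional considerations. Since $\Lip(\cP)<\tfrac12$, Theorems~\ref{theo:equilibria} and~\ref{theo:attractiveness} already provide a unique equilibrium $f_\infty\in\cM{\rho}$ of~\eqref{eq:kinetic_model} which is globally attractive. What remains to be shown is that $f_\infty$ is concentrated on the lattice $\{\w_{ij}\}$ and that its weights obey~\eqref{eq:discr_equilibria}.

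First I would fix any admissible discrete initial datum $f_0=\sum_{i,j}f^0_{ij}\delta_{\w_{ij}}$ with $f^0_{ij}\ge 0$ and $\sum_{i,j}f^0_{ij}=\rho$ (for instance one concentrated at a single node). By Theorem~\ref{theo:cont-discr_link} the corresponding mild solution is $f(t,\w)=\sum_{i,j}f_{ij}(t)\delta_{\w_{ij}}(\w)$, whose coefficients solve~\eqref{eq:discr_kinetic_model} and, by Remark~\ref{rem:prop_discr_model}, stay nonnegative with constant total mass $\rho$. Hence, for every $t\ge 0$, $f(t,\cdot)$ lies in
\[
	D_\rho:=\Bigl\{\textstyle\sum_{i,j}c_{ij}\delta_{\w_{ij}}\ :\ c_{ij}\ge 0,\ \sum_{i,j}c_{ij}=\rho\Bigr\}\subset\cM{\rho}.
\]
The map $(c_{ij})\mapsto\sum_{i,j}c_{ij}\delta_{\w_{ij}}$ is a homeomorphism from the (scaled) simplex $\{(c_{ij}):c_{ij}\ge 0,\ \sum_{i,j}c_{ij}=\rho\}$ onto $D_\rho$ with the $W_1$-metric: because $\sum_{i,j}(c_{ij}-c'_{ij})=0$ any competitor test function may be normalised to vanish at a fixed node, so $\wass{\sum_{i,j}c_{ij}\delta_{\w_{ij}}}{\sum_{i,j}c'_{ij}\delta_{\w_{ij}}}\le\operatorname{diam}(I^2)\sum_{i,j}\abs{c_{ij}-c'_{ij}}$, while the converse implication (convergence of the weights from $W_1$-convergence) follows by testing against continuous functions that equal $1$ at one node and $0$ at the remaining ones. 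In particular $D_\rho$ is compact, hence closed in $\cM{\rho}$, and it is invariant under the dynamics.

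By the global attractiveness of Theorem~\ref{theo:attractiveness}, $\wass{f(t,\cdot)}{f_\infty}\to 0$ as $t\to+\infty$; since $f(t,\cdot)\in D_\rho$ for all $t$ and $D_\rho$ is closed, we get $f_\infty\in D_\rho$, i.e., $f_\infty(\w)=\sum_{i,j}f^\infty_{ij}\delta_{\w_{ij}}(\w)$ with $f^\infty_{ij}\ge 0$, $\sum_{i,j}f^\infty_{ij}=\rho$. To identify the weights, I would use that $f_\infty$, being an equilibrium, is a stationary mild solution, so $Q^{+}(f_\infty,\,f_\infty)=\rho f_\infty$; inserting the discrete forms of $f_\infty$ and of $\cP$ and repeating verbatim the weak-form computation that precedes Theorem~\ref{theo:cont-discr_link} turns this identity into~\eqref{eq:discr_equilibria}. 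Equivalently, one may pass to the limit directly in~\eqref{eq:discr_kinetic_model}: the right-hand side is a polynomial, hence continuous, function of the coefficients, so $f'_{ij}(t)$ converges to its value at $(f^\infty_{ij})$ (using $f_{ij}(t)\to f^\infty_{ij}$); were this value nonzero, $f_{ij}(t)$ would eventually be strictly monotone with derivative bounded away from $0$ and thus leave $[0,\rho]$, a contradiction — so the limit is $0$, which is~\eqref{eq:discr_equilibria}. Uniqueness of the weights follows from the uniqueness of $f_\infty$.

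I expect the only genuinely delicate point to be the topological one — ruling out that the abstractly constructed globally attractive equilibrium develops a diffuse component — and this is exactly what the closedness of $D_\rho$ in the Wasserstein metric, combined with the invariance of $D_\rho$ under the flow granted by Theorem~\ref{theo:cont-discr_link}, delivers; the remaining steps are routine finite-dimensional bookkeeping.
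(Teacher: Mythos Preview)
Your proof is correct, and it shares with the paper the key topological ingredient: the set $D_\rho$ of lattice-supported measures is closed in $(\cM{\rho},W_1)$. The route, however, differs. The paper proceeds \emph{statically}: it observes that $\frac{1}{\rho}Q^{+}$ maps $D_\rho$ into itself, proves $D_\rho$ is closed via a direct Cauchy-sequence argument (testing against functions vanishing at all nodes but one), and then re-applies Banach's contraction principle on the complete subspace $(D_\rho,W_1)$ to locate the fixed point there. You proceed \emph{dynamically}: you invoke Theorem~\ref{theo:cont-discr_link} to get invariance of $D_\rho$ under the full time evolution, prove closedness via compactness (continuous image of the simplex), and then use the global attractiveness of Theorem~\ref{theo:attractiveness} to trap the limit in $D_\rho$. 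Your approach has the mild advantage of reusing attractiveness rather than re-running a fixed-point argument, and your compactness proof of closedness is slicker; the paper's approach is slightly more self-contained in that it does not need the time-dependent solution theory (only the equilibrium characterisation). Both derivations of~\eqref{eq:discr_equilibria} from $Q^{+}(f_\infty,f_\infty)=\rho f_\infty$ are the same routine computation; your alternative ODE-limit argument is unnecessary but harmless.
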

\begin{proof}
We consider directly the case $\rho>0$, for $\rho=0$ implies uniquely $f_\infty\equiv 0$.

Since $\Lip(\cP)<\frac{1}{2}$, we know from Theorems~\ref{theo:equilibria},~\ref{theo:attractiveness} that~\eqref{eq:kinetic_model} admits a unique and globally attractive equilibrium distribution $f_\infty\in\cM{\rho}$, which is found as the fixed point of the mapping $f\mapsto\frac{1}{\rho}Q^{+}(f,\,f)$. In particular, defining the subset
$$ D:=\left\{f\in\cM{\rho}\,:\,f(\w)=\sum_{i=1}^{n}\sum_{j=1}^{m}f_{ij}\delta_{\w_{ij}}(\w),\
	f_{ij}\geq 0,\ \sum_{i=1}^{n}\sum_{j=1}^{m}f_{ij}=\rho\right\}, $$
it is easy to see that if $\cP$ has the form indicated in Theorem~\ref{theo:cont-discr_link} then the operator $\frac{1}{\rho}Q^{+}$ maps $D$ into itself. In fact, for $f\in D$ we get
\begin{equation*}
	\frac{1}{\rho}Q^{+}(f,\,f)=\sum_{i=1}^{n}\sum_{j=1}^{m}\left(\frac{1}{\rho}\sum_{i_\ast,i^\ast=1}^{n}\sum_{j_\ast,j^\ast=1}^{m}
		\cP_{i_\ast j_\ast,i^\ast j^\ast}^{ij}(\rho)f_{i_\ast j_\ast}f_{i^\ast j^\ast}\right)\delta_{\w_{ij}}(\w).
\end{equation*}
From here we also deduce formally~\eqref{eq:discr_equilibria}. Therefore, in order to get the thesis, it is sufficient to prove that $D$ is closed in $\cM{\rho}$. In fact this will imply that $(D,\,W_1)$ is a complete metric space, and Banach contraction principle will then locate the fixed point of $\frac{1}{\rho}Q^{+}$ in $D$.

Let $(f^k)\subseteq D$ be a convergent sequence in $\cM{\rho}$ with respect to the $W_1$ metric. It is then Cauchy, hence given $\epsilon>0$ we find $N_\epsilon\in\mathbb{N}$ such that if $h,\,k>N_\epsilon$ then $\wass{f^h}{f^k}<\epsilon$. This condition means
$$ \abs{\int_{I^2}\varphi(\w)\left(f^k(\w)-f^h(\w)\right)\,d\w}=
	\abs{\sum_{i=1}^{n}\sum_{j=1}^{m}\varphi(\w_{ij})\left(f^k_{ij}-f^h_{ij}\right)}<\epsilon $$
for every $\varphi\in C_{b,1}(I^2)\cap\Lip_1(I^2)$. In particular, taking a function $\varphi$ which vanishes at every $\w_{ij}$ but one, say $\w_{\bar{\imath}\bar{\jmath}}$, we discover $\abs{\varphi(\w_{\bar{\imath}\bar{\jmath}})}\abs{f^h_{\bar{\imath}\bar{\jmath}}-f^k_{\bar{\imath}\bar{\jmath}}}<\epsilon$ for all $h,\,k>N_\epsilon$. Thus we deduce that $(f^k_{ij})_k$ is a Cauchy sequence in $\R$, hence for all $i,\,j$ there exists $f_{ij}\in\R$ such that $f^k_{ij}\to f_{ij}$ ($k\to\infty$). Clearly $f_{ij}\geq 0$ because the $f^k_{ij}$'s are all non-negative by assumption; moreover, $\sum_{i=1}^{n}\sum_{j=1}^{m}f_{ij}=\lim_{k\to\infty}\sum_{i=1}^{n}\sum_{j=1}^{m}f^k_{ij}=\rho$. Therefore $f:=\sum_{i=1}^{n}\sum_{j=1}^{m}f_{ij}\delta_{\w_{ij}}\in D$.

We now claim that $f^k\to f$ in the $W_1$ metric:
\begin{align*}
	\wass{f^k}{f} &= \sup_{\varphi\in C_{b,1}(I^2)\cap\Lip_1(I^2)}\int_{I^2}\varphi(\w)\left(f(\w)-f^k(\w)\right)\,d\w \\
	&= \sup_{\varphi\in C_{b,1}(I^2)\cap\Lip_1(I^2)}\sum_{i=1}^{n}\sum_{j=1}^{m}\varphi(\w_{ij})\left(f_{ij}-f^k_{ij}\right) \\
	&\leq \sum_{i=1}^{n}\sum_{j=1}^{m}\abs{f_{ij}-f^k_{ij}}\xrightarrow{k\to\infty}0.
\end{align*}
This implies that $D$ is closed and the proof is completed.
\end{proof}

Under the assumptions of Theorem~\ref{theo:discr_equilibria},~\eqref{eq:discr_equilibria} defines a mapping $[0,\,1]\ni\rho\mapsto\{f^\infty_{ij}(\rho)\}$, i.e., for every $\rho\in [0,\,1]$ there exist unique coefficients $f^\infty_{ij}$ solving~\eqref{eq:discr_equilibria} such that $\{f^\infty_{ij}\}$ is the equilibrium of system~\eqref{eq:discr_kinetic_model} with moreover $\sum_{i=1}^{n}\sum_{j=1}^{m}f^\infty_{ij}=\rho$.

Owing to the argument above, for each $\rho$ it is possible to compute the corresponding average flux and mean speed at equilibrium by means of formulas~\eqref{eq:ave_flux},~\eqref{eq:mean_speed} with the kinetic distribution $f_\infty$. This generates the two mappings
\begin{equation}
	\rho\mapsto q(\rho):=\sum_{i=1}^{n}v_i\sum_{j=1}^{m}f^\infty_{ij}(\rho),
		\qquad \rho\mapsto V(\rho):=\frac{q(\rho)}{\rho},
	\label{eq:funddiag}
\end{equation}
which are the theoretical definitions of the fundamental and speed diagrams, respectively, of traffic. Furthermore, it is possible to estimate the dispersion of the microscopic speeds at equilibrium by computing the standard deviation of the speed:
$$ \sigma_V(\rho):=\sqrt{\frac{1}{\rho}\sum_{i=1}^{n}{\left(v_i-V(\rho)\right)}^2\sum_{j=1}^{m}f^\infty_{ij}(\rho)}, $$
that of the flux being $\rho\sigma_V(\rho)$, which gives a measure of the homogeneity of the driving styles of the drivers.

Figure~\ref{fig:funddiag} shows the diagrams~\eqref{eq:funddiag}, with the corresponding standard deviations, for different values of the constant $\alpha$ parameterising the transition probabilities, cf. Sections~\ref{sec:risk_trans},~\ref{sec:speed_trans}. Each pair $(\rho,\,q(\rho))$, $(\rho,\,V(\rho))$ has been computed by integrating numerically~\eqref{eq:discr_kinetic_model} up to a sufficiently large final time, such that the equilibrium was reached.

\begin{figure}[!t]
\includegraphics[width=0.9\textwidth]{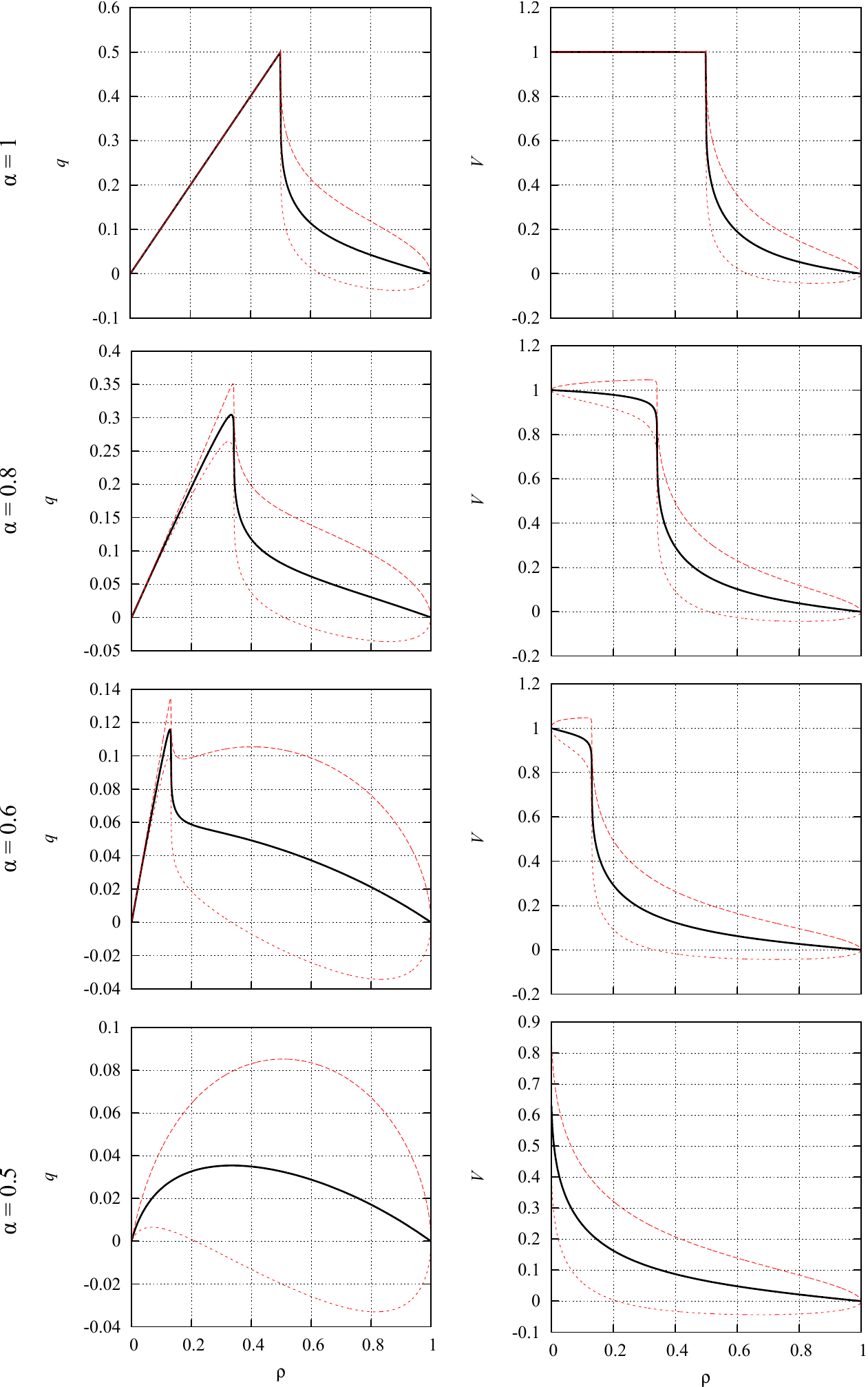}
\caption{Average flux (left column) and mean speed (right column) vs. traffic density for different levels of the quality of the environment. Red dashed lines are the respective standard deviations. $n=6$ uniformly spaced speed classes have been used.}
\label{fig:funddiag}
\end{figure}

For $\alpha=1$ (best environmental conditions) the diagrams are the same as those studied analytically in~\cite{fermo2014DCDSS}. In particular, they show a clear separation between the so-called \emph{free} and \emph{congested phases} of traffic: the former, taking place at low density ($\rho<0.5$), is characterised by the fact that vehicles travel at the maximum speed with zero standard deviation; the latter, taking place instead at high density ($\rho>0.5$), is characterised by a certain dispersion of the microscopic speeds with respect to their mean value. In~\cite{fermo2014DCDSS} the critical value $\rho=0.5$ has been associated with a supercritical bifurcation of the equilibria, thereby providing a precise mathematical characterisation of the phase transition.

For $\alpha<1$, analogously detailed analytical results are not yet available and, to our knowledge, Theorems~\ref{theo:equilibria},~\ref{theo:attractiveness} in Appendix~\ref{app:basictheo} are the first results giving at least sufficient conditions for the qualitative characterisation of equilibrium solutions to~\eqref{eq:kinetic_model} in the general case. According to the graphs in Figure~\ref{fig:funddiag}, the model predicts lower and lower critical values for the density threshold triggering the phase transition. In addition to that, coherently with the experimental observations, cf. e.g.,~\cite{kerner2004BOOK}, some scattering of the diagrams appears also for low density, along with a \emph{capacity drop} visible in the average flux (i.e., the fact that the maximum flux in the congested phase is lower than the maximum one in the free phase), which separates the free and congested phases as described in~\cite{zhang2005TRB}.

Compared to typical experimental data, the most realistic diagrams seem to be those obtained for $\alpha=0.8$, which denotes suboptimal though not excessively poor environmental conditions. It is worth stressing that such a realism of the theoretical diagrams is not only relevant for supporting the derivation of macroscopic kinematic features of the flow of vehicles at equilibrium out of microscopic interaction rules far from equilibrium. It constitutes also a reliable basis for interpreting, in a similar multiscale perspective, the link with risk and safety issues, for which synthetic and informative empirical data similar to the fundamental and speed diagrams are, to the authors knowledge, not currently available for direct comparison.

\subsection{Risk diagrams of traffic}
\label{sec:riskdiag}
Starting from the statistical distribution of the risk given in~\eqref{eq:risk_distr}, we propose the following definition for the probability of accident along the road:
\begin{definition}[Probability of accident] \label{def:p_acc}
Let $\bar{u}\in (0,\,1)$ be a risk threshold above which the personal risk for a representative vehicle is considered too high. We define the instantaneous \emph{probability of accident} $P=P(t)$ (associated with $\bar{u}$) as the normalised number of vehicles whose personal risk is, at time $t$, greater than or equal to $\bar{u}$:
$$ P(t):=\frac{1}{\rho}\int_{\bar{u}}^1\varphi(t,\,u)\,du=\frac{1}{\rho}\int_{\bar{u}}^1\int_0^1 f(t,\,v,\,u)\,dv\,du. $$
\end{definition}

Asymptotically, using the discrete-state equilibrium distribution function $f_\infty=f_\infty(\rho)$ found in Theorem~\ref{theo:discr_equilibria}, we obtain the mapping
\begin{equation}
	\rho\mapsto P(\rho):=\frac{1}{\rho}\sum_{j\,:\,u_j\geq\bar{u}}\sum_{i=1}^{n}f^\infty_{ij}(\rho),
	\label{eq:p_acc}
\end{equation}
which shows that the probability of accident is, in the long run, a function of the traffic density. We call~\eqref{eq:p_acc} the \emph{accident probability diagram}.

Definition~\ref{def:p_acc} and, in particular,~\eqref{eq:p_acc} depend on the threshold $\bar{u}$, which needs to be estimated in order for the model to serve quantitative purposes. If, for a given road, the empirical probability of accident is known (for instance, from time series on the frequency of accidents, see e.g.,~\cite{abdel-aty2000AAP,miaou1993AAP,oppe1989AAP}) then it is possible to find $\bar{u}$  by solving an inverse problem which leads the theoretical probability~\eqref{eq:p_acc} to match the experimental one. This way, the road under consideration can be assigned the \emph{risk threshold} $\bar{u}$.

The question then arises how to use the information provided by the risk threshold $\bar{u}$ for the assessment of safety standards. In fact, the personal risk $u$, albeit a primitive variable of the model, is not a quantity which can be really measured for each vehicle: a macroscopic synthesis is necessary. Quoting from~\cite{KiwiRAP2012}:
\begin{quote}
Personal risk is most of interest to the public, as it shows the risk to road users, as individuals.
\end{quote}
To this purpose, we need to further post-process the statistical information brought by the kinetic model. Taking inspiration from the fundamental and speed diagrams of traffic discussed in Section~\ref{sec:funddiag}, a conceivable approach is to link the personal risk, conveniently understood in an average sense, to the macroscopic traffic density along the road. For this we define:
\begin{definition}[Risk diagram] \label{def:riskdiag}
The \emph{risk diagram} of traffic is the mapping (cf.~\eqref{eq:ave_risk})
\begin{equation}
	\rho\mapsto U(\rho):=\frac{1}{\rho}\sum_{j=1}^{m}u_j\sum_{i=1}^{n}f^\infty_{ij}(\rho),
	\label{eq:riskdiag}
\end{equation}
$f_\infty=f_\infty(\rho)$ being the equilibrium kinetic distribution function of Theorem~\ref{theo:discr_equilibria}. The related standard deviation is
$$ \sigma_U(\rho):=\sqrt{\frac{1}{\rho}\sum_{j=1}^{m}{\left(u_j-U(\rho)\right)}^2\sum_{i=1}^{n}f^\infty_{ij}(\rho)}. $$
\end{definition}

Using the tools provided by Definition~\ref{def:riskdiag}, we can finally fix a \emph{safety criterion} which discriminates between safety and risk regimes of traffic depending on the traffic loads:
\begin{definition}[Safety criterion] \label{def:safety_crit}
Let $\bar{u}\in (0,\,1)$ be the risk threshold fixed by Definition~\ref{def:p_acc}. The \emph{safety regime} of traffic along a given road corresponds to the traffic loads $\rho\in [0,\,1]$ such that
$$ U(\rho)+\sigma_U(\rho)<\bar{u}. $$
The complementary regime, i.e., the one for which $U(\rho)+\sigma_U(\rho)\geq\bar{u}$, is the \emph{risk regime}.
\end{definition}

\begin{remark}
An alternative, less precautionary, criterion might identify the safety regime with the traffic loads such that $U(\rho)<\bar{u}$ and the risk regime with those such that $U(\rho)\geq\bar{u}$.
\end{remark}

\begin{figure}[!t]
\includegraphics[width=0.9\textwidth]{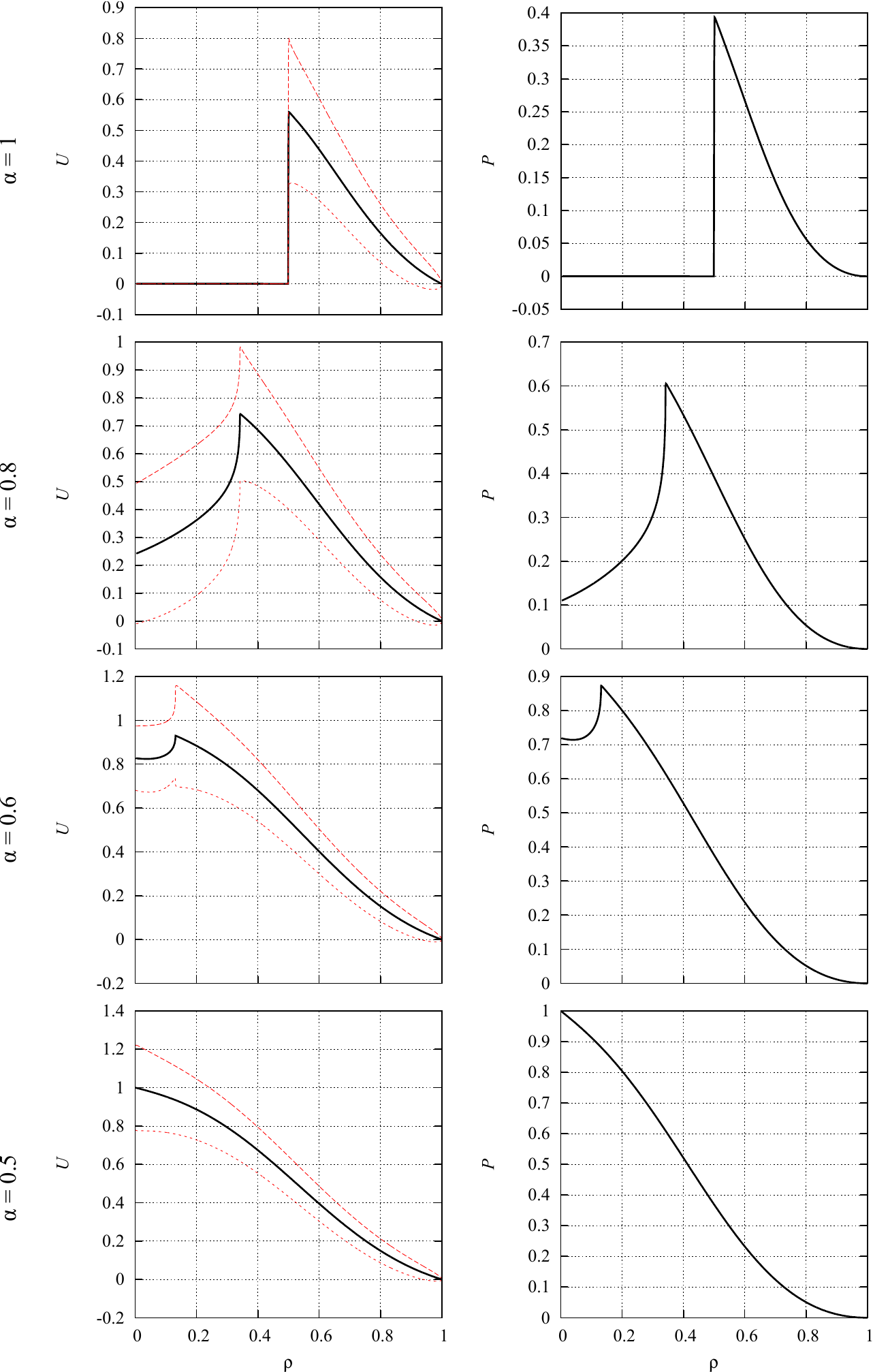}
\caption{Average risk (left column), with related standard deviation (red dashed lines), and probability of accident (right column) vs. traffic density for different levels of the quality of the environment. $n=6$ and $m=3$ uniformly spaced speed and risk classes, respectively, have been used}
\label{fig:riskdiag}
\end{figure}

Figure~\ref{fig:riskdiag} shows the risk diagram~\eqref{eq:riskdiag} and the probability of accident~\eqref{eq:p_acc} at equilibrium obtained numerically for various environmental conditions as described in Section~\ref{sec:funddiag}, using the heuristic risk threshold $\bar{u}=0.7$.

For $\alpha=1$ the average risk and the probability of accident are deterministically zero for all values of the traffic density in the free phase ($\rho<0.5$). This is a consequence of the fact that, as shown by the corresponding diagrams in Figure~\ref{fig:funddiag}, at low density in optimal environmental conditions vehicles virtually do not interact, all of them travelling undisturbed at the maximum speed. In practice, vehicles behave as if the road were empty and consequently the model predicts maximal safety with no possibility of collisions. In the congested phase ($\rho>0.5$), instead, the average risk and the probability of accident rise suddenly to a positive value, following the emergence of the scattering of the microscopic speeds, see again the corresponding panels in Figure~\ref{fig:funddiag}. Then they decrease monotonically to zero when $\rho$ approaches $1$, for in a full traffic jam vehicles do not move. Notice that the maximum of the average risk and of the probability of accident is in correspondence of the critical density value $\rho=0.5$.

For $\alpha<1$, the main features of the diagrams $\rho\mapsto U(\rho)$ and $\rho\mapsto P(\rho)$ described in the ideal prototypical case above remain unchanged. In particular, a comparison with Figure~\ref{fig:funddiag} shows that the maximum of both diagrams is still reached in correspondence of the density value triggering the transition from free to congested traffic, see also Figure~\ref{fig:postproc}a. This is clearly in good agreement with the intuition, indeed it identifies the phase transition as the most risky situation for drivers. It is worth remarking that such a macroscopically observable fact has not been postulated in the construction of the model but has emerged as a result of more elementary microscopic interaction rules. For $\alpha=0.6,\,0.8$, the average risk and the probability of accident take, in the free phase of traffic, a realistic nonzero value, which first increases before the phase transition and then decreases to zero in the congested phase. Notice that the standard deviation of the risk is higher in the free than in the congested phase, which again meets the intuition considering that at low density the movement of single vehicles is less constrained by the global flow. For $\alpha=0.5$, environmental conditions are so poor that $U(\rho),\,P(\rho)\to 1$ when $\rho\to 0^+$, namely in correspondence of the maximum mean speed (cf. the corresponding panels of Figure~\ref{fig:funddiag}).

\begin{figure}[!t]
\includegraphics[width=\textwidth]{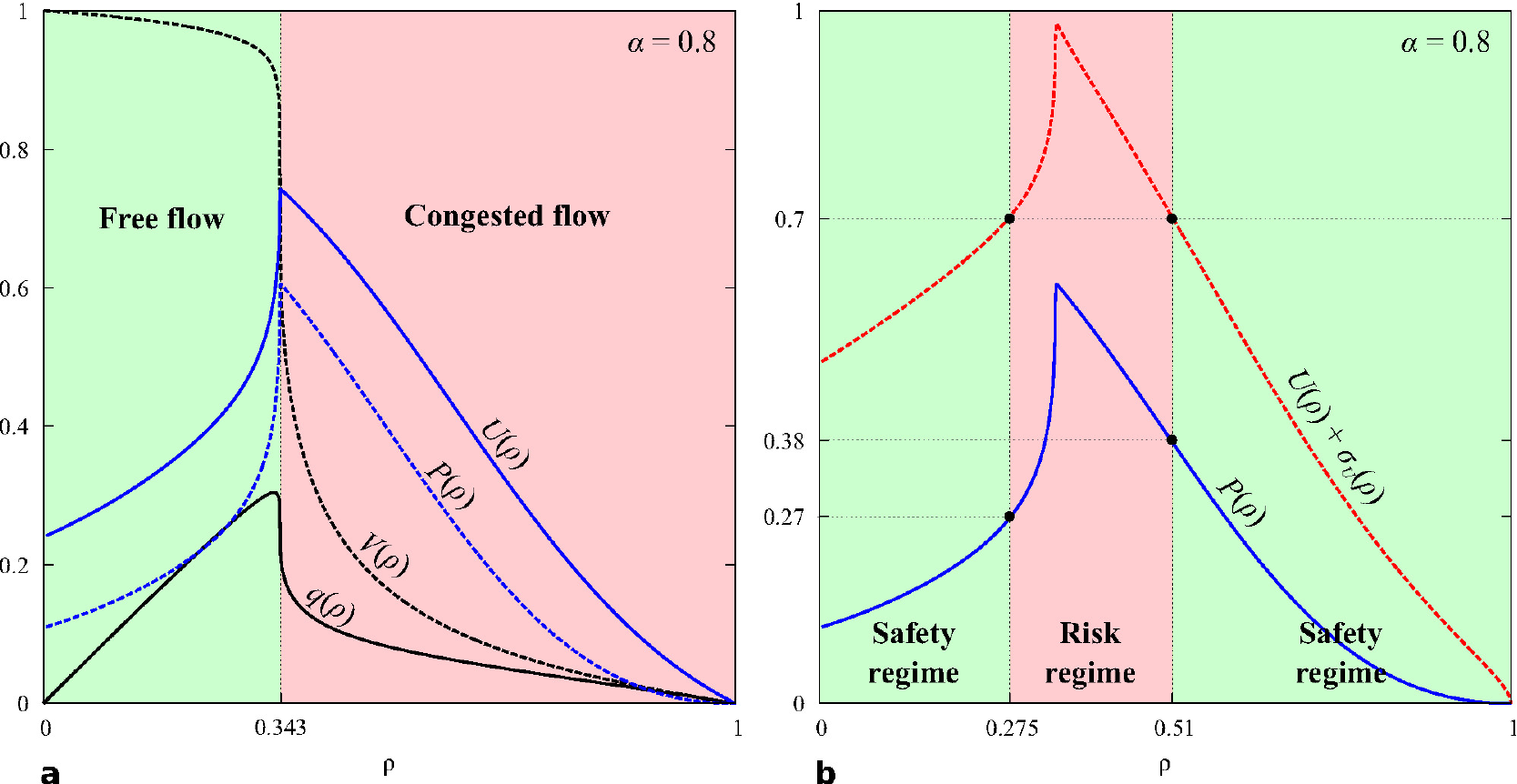}
\caption{\textbf{a.}~Comparison of the curves~\eqref{eq:funddiag},~\eqref{eq:p_acc},~\eqref{eq:riskdiag} in the case $\alpha=0.8$. \textbf{b.}~Determination of safety and risk regimes of traffic for $\alpha=0.8$, with corresponding probabilities of accident.}
\label{fig:postproc}
\end{figure}

According to~\cite{KiwiRAP2012}:
\begin{quote}
Personal risk is typically higher in more difficult terrains, where traffic volumes and road standards are often lower.
\end{quote}
By looking at the graphs in the first column of Figure~\ref{fig:riskdiag}, we see that the results of the model match qualitatively well this experimental observation: for low $\rho$ and decreasing $\alpha$ the average personal risk tends indeed to increase.

Again, the most realistic (namely suboptimal, though not excessively poor) scenario appears to be the one described by $\alpha=0.8$. In this case, cf. Figure~\ref{fig:postproc}b, the safety criterion of Definition~\ref{def:safety_crit}, i.e.,
$$ U(\rho)+\sigma_U(\rho)<0.7, $$
individuates two safety regimes for traffic loads $\rho\in [0,\,\rho_1)$ and $\rho\in (\rho_2,\,1]$, with $\rho_1\approx 0.275$ and $\rho_2\approx 0.51$, respectively. The first one corresponds to a probability of accident $P(\rho)\lesssim 27\%$, the second one to $P(\rho)\lesssim 38\%$. Not surprisingly, the maximum admissible probability of accident in free flow ($\rho<\rho_1$) is lower than the one in congested flow ($\rho>\rho_2$), meaning that the safety criterion of Definition~\ref{def:safety_crit} turns out to be more restrictive in the first than in the second case. This can be understood thinking of the fact that in free flow speeds are higher and the movement of vehicles is less constrained by the global flow, which imposes tighter safety standards.

\section{Conclusions and perspectives}
\label{sec:conclusions}
In this paper we have proposed a Boltzmann-type kinetic model which describes the influence of the driving style on the personal driving risk in terms of microscopic binary interactions among the vehicles. In particular, speed transitions due to encounters with other vehicles, and the related changes of personal risk, are described in probability, thereby accounting for the interpersonal variability of the human behaviour, hence ultimately for the \emph{subjective} component of the risk. Moreover, they are parameterised by the local density of vehicles along the road and by the environmental conditions (for instance, type of road, weather conditions), so as to include in the mathematical description also the \emph{objective} component of the risk.

By studying the equilibrium solutions of the model, we have defined two macroscopic quantities of interest for the global assessment of the risk conditions, namely the \emph{risk diagram} and the \emph{accident probability diagram}. The former gives the average risk along the road and the latter the probability of accident both as functions of the density of vehicles, namely of the level of traffic congestion. These diagrams compare well with the celebrated fundamental and speed diagrams of traffic, also obtainable from the equilibrium solutions of our kinetic model, in that they predict the maximum risk across the phase transition from free to congested flow, when several perturbative phenomena are known to occur in the macroscopic hydrodynamic behaviour of traffic (such as e.g., capacity drop~\cite{zhang2005TRB}, scattering of speed and flux and appearance of a third phase of ``synchronised flow''~\cite{kerner2004BOOK}). Moreover, within the free and congested regimes they are in good agreement with the experimental findings of accident data collection campaigns: for instance, they predict that the personal risk rises in light traffic and poor environmental conditions, coherently with what is stated e.g., in~\cite{KiwiRAP2012}.

By using the aforesaid diagrams we have proposed the definition of a \emph{safety criterion}, which, upon assigning to a given road a risk threshold based on the knowledge of real data on accidents, individuates \emph{safety} and \emph{risk regimes} depending on the volume of traffic. Once again, it turns out that the risk regime consists of a range of vehicle densities encompassing the critical one at which phase transition occurs. This type of information is perhaps more directly useful to the public than to traffic controlling authorities, because it shows the average risk that a representative road user is subject to. Nevertheless, by identifying traffic loads which may pose safety threats, it also indicates which densities should be preferably avoided along the road and when risk reducing measures should be activated.

This work should be considered as a very first attempt to formalise, by a mathematical model, the risk dynamics in vehicular traffic from the point of view of \emph{simulation} and \emph{prediction} rather than simply of statistical description. Several improvements and developments are of course possible, which can take advantage of some existing literature about kinetic models of vehicular traffic: for instance, one may address the spatially inhomogeneous problem~\cite{delitala2007M3AS,fermo2013SIAP} to track ``risk waves'' along the road; or the problem on networks~\cite{fermo2015M3AS} to study the propagation of the risk on a set of interconnected roads; or even the impact of different types of vehicles, which form a ``traffic mixture''~\cite{puppo2016CMS}, on the risk distribution. On the other hand, the ideas presented in this paper may constitute the basis for modelling risk and safety aspects also of other systems of interacting agents particularly interested by such issues. It is the case of e.g., human crowds, for which a quite wide, though relatively recent, literature already exists (see~\cite[Chapter 4]{cristiani2014BOOK} for a survey), that in some cases~\cite{agnelli2015M3AS} uses a kinetic formalism close to the one which inspired the present work.

\appendix

\section{Basic theory of the kinetic model in Wasserstein spaces}
\label{app:basictheo}
Equation~\eqref{eq:kinetic_model}, complemented with a suitable initial condition, produces the following Cauchy problem:
\begin{equation}
	\begin{cases}
		\partial_t f=Q^{+}(f,\,f)-\rho f, & t>0,\,\w\in I^2 \\[1mm]
		f(0,\,\w)=f_0(\w), & \w\in I^2
	\end{cases}
	\label{eq:cauchy}
\end{equation}
with the compatibility condition $\int_{I^2}f_0(\w)\,d\w=\rho$. Recall that $I^2=[0,\,1]^2\subset\R^2$ is the space of the microscopic states. The problem can be rewritten in mild form by multiplying both sides of the equation by $e^{\rho t}$ and integrating in time:
\begin{align}
	\begin{aligned}[t]
		f(t,\,\w) &= e^{-\rho t}f_0(\w) \\
		&\phantom{=} +\int_0^t e^{\rho(s-t)}\iint_{I^4}\cP(\w_\ast\to\w\,\vert\,\w^\ast,\,\rho)f(s,\,\w_\ast)f(s,\,\w^\ast)\,d\w_\ast\,d\w^\ast\,ds,
	\end{aligned}
	\label{eq:mild}
\end{align}
where we have used that, in view of~\eqref{eq:tabgames}, $\rho=\int_{I^2}f(t,\,\w)\,d\w$ is constant in $t$.

In order to allow for measure-valued kinetic distribution functions, as it happens in the model discussed from Section~\ref{sec:discrete} onwards, an appropriate space in which to study~\eqref{eq:mild} is $X:=C([0,\,T];\,\cM{\rho})$, where $T>0$ is a final time and $\cM{\rho}$ is the space of positive measures on $I^2$ having mass $\rho\geq 0$. An element $f\in X$ is then a continuous mapping $t\mapsto f(t)$, where, for all $t\in [0,\,T]$, $f(t)$ is a positive measure with $\int_{I^2}f(t,\,\w)\,d\w=\rho$.

$X$ is a complete metric space with the distance $\sup\limits_{t\in [0,\,T]}\wass{f(t)}{g(t)}$, where
\begin{equation}
	\wass{f(t)}{g(t)}=\sup_{\varphi\in\Lip_1(I^2)}\int_{I^2}\varphi(\w)(g(t,\,\w)-f(t,\,\w))\,d\w.
	\label{eq:wass}
\end{equation}
is the \emph{1-Wasserstein distance} between $f(t),\,g(t)\in\cM{\rho}$. In particular,
$$ \Lip_1(I^2)=\{\varphi\in C(I^2)\,:\,\Lip(\varphi)\leq 1\}, $$
$\Lip(\varphi)$ denoting the Lipschitz constant of $\varphi$.

\begin{remark}
The definition~\eqref{eq:wass} of $W_1$ follows from the Kantorovich-Rubinstein duality formula, see e.g.,~\cite[Chapter 7]{ambrosio2008BOOK}. However, since in $\cM{\rho}$ all measures carry the same mass and, furthermore, the domain $I^2$ is bounded with $\operatorname{diam}(I^2)\leq 2$, the supremum at the right-hand side is actually the same as that computed over the smaller set $C_{b,1}(I^2)\cap\Lip_1(I^2)$, where
$$ C_{b,1}(I^2)=\{\varphi\in C(I^2)\,:\,\norm{\varphi}_\infty\leq 1\}, $$
see~\cite[Chapter 1]{ulikowska2013PhDTHESIS}. Hence we also have:
\begin{equation}
	\wass{f(t)}{g(t)}=\sup_{\varphi\in C_{b,1}(I^2)\cap\Lip_1(I^2)}\int_{I^2}\varphi(\w)(g(t,\,\w)-f(t,\,\w))\,d\w.
	\label{eq:wass_Cb1}
\end{equation}
In the following we will use both~\eqref{eq:wass} and~\eqref{eq:wass_Cb1} interchangeably.
\end{remark}

\begin{remark} \label{rem:wass_lambda}
Let $\lambda>0$ and let $\varphi\in C(I^2)$ with $\Lip(\varphi)\leq\lambda$. Then
\begin{align*}
	\int_{I^2}\varphi(\w)(g(t,\,\w)-f(t,\,\w))\,d\w &= \lambda\int_{I^2}\frac{\varphi(\w)}{\lambda}(g(t,\,\w)-f(t,\,\w))\,d\w \\
	&\leq \lambda\wass{f(t)}{g(t)},
\end{align*}
considering that $\frac{\varphi}{\lambda}\in\Lip_1(I^2)$.

We will occasionally use this property in the proofs of the forthcoming theorems. Notice that, if $\lambda<1$, this estimate is stricter than that obtained by using directly the fact that $\varphi\in\Lip_1(I^2)$ (i.e., the one without $\lambda$ at the right-hand side).
\end{remark}

To establish the next results, we will always assume that the transition probability distribution $\cP$ satisfies the following Lipschitz continuity property:
\begin{assumption} \label{ass:Lip_P}
Let $\cP(\w_\ast\to\cdot\,\vert\,\w^\ast,\,\rho)\in\sP(I^2)$ for all $\w_\ast,\,\w^\ast\in I^2$, $\rho\in [0,\,1]$, where $\sP(I^2)$ is the space of probability measures on $I^2$. We assume that there exists $\Lip(\cP)>0$, \emph{which may depend on} $\rho$ (although we do not write such a dependence explicitly), such that
$$ \wass{\cP(\w_{\ast 1}\to\cdot\,\vert\,\w^\ast_1,\,\rho)}{\cP(\w_{\ast 2}\to\cdot\,\vert\,\w^\ast_2,\,\rho)}
	\leq\Lip(\cP)\left(\abs{\w_{\ast 2}-\w_{\ast 1}}+\abs{\w^\ast_2-\w^\ast_1}\right) $$
for all $\w_{\ast 1},\,\w_{\ast 2},\,\w^\ast_1,\,\w^\ast_2\in I^2$ and all $\rho\in [0,\,1]$.
\end{assumption}

\subsection{Existence and uniqueness of the solution}
Taking advantage of the mild formulation~\eqref{eq:mild} of the problem, we apply Banach fixed-point theorem in $X$ to prove:

\begin{theorem}	\label{theo:exist_uniqueness}
Fix $\rho\in [0,\,1]$ and let $f_0\in\cM{\rho}$. There exists a unique $f\in C([0,\,+\infty);\,\cM{\rho})$ which solves~\eqref{eq:mild}.
\end{theorem}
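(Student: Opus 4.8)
The plan is to apply the Banach fixed-point theorem to the integral operator associated with the mild formulation~\eqref{eq:mild}, working first on a small time interval $[0,\,T]$ and then extending the solution to all of $[0,\,+\infty)$ by a continuation argument. Concretely, for $f\in X=C([0,\,T];\,\cM{\rho})$ I would define
$$ (\mathcal{T}f)(t,\,\w):=e^{-\rho t}f_0(\w)+\int_0^t e^{\rho(s-t)}\iint_{I^4}\cP(\w_\ast\to\w\,\vert\,\w^\ast,\,\rho)f(s,\,\w_\ast)f(s,\,\w^\ast)\,d\w_\ast\,d\w^\ast\,ds, $$
read as a measure acting on test functions $\varphi\in C(I^2)$ via the weak formulation. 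The first step is to check that $\mathcal{T}$ maps $X$ into itself: one must verify that for each $t$, $(\mathcal{T}f)(t)$ is a positive measure on $I^2$ of total mass exactly $\rho$ (use $\varphi\equiv 1$, property~\eqref{eq:tabgames}, and the fact that $f(s)$ has mass $\rho$, so the double integral contributes $\rho^2$ and the time integral of $e^{\rho(s-t)}\rho^2$ is $\rho(1-e^{-\rho t})$, which combines with $e^{-\rho t}\rho$ to give $\rho$), and that $t\mapsto(\mathcal{T}f)(t)$ is continuous in $W_1$, which follows from the continuity of the integrand in $s$ and dominated convergence.

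The second and central step is the contraction estimate. Given $f,\,g\in X$, I would bound $\wass{(\mathcal{T}f)(t)}{(\mathcal{T}g)(t)}$ by testing against $\varphi\in C_{b,1}(I^2)\cap\Lip_1(I^2)$ and splitting the difference of the bilinear terms as
$$ f(s,\,\w_\ast)f(s,\,\w^\ast)-g(s,\,\w_\ast)g(s,\,\w^\ast) = \bigl(f(s,\,\w_\ast)-g(s,\,\w_\ast)\bigr)f(s,\,\w^\ast)+g(s,\,\w_\ast)\bigl(f(s,\,\w^\ast)-g(s,\,\w^\ast)\bigr). $$
For the term involving $\cP$, the map $\w_\ast\mapsto\int_{I^2}\varphi(\w)\cP(\w_\ast\to\w\,\vert\,\w^\ast,\,\rho)\,d\w$ is bounded by $1$ in sup-norm (since $\varphi\in C_{b,1}$ and $\cP$ is a probability measure) and Lipschitz with constant $\Lip(\cP)$ in $\w_\ast$ (by Assumption~\ref{ass:Lip_P} together with the Kantorovich--Rubinstein characterisation~\eqref{eq:wass_Cb1}), so each piece of the split can be estimated via the $W_1$ distance between $f(s)$ and $g(s)$ — using Remark~\ref{rem:wass_lambda} where the relevant Lipschitz constant exceeds one — picking up a multiplicative factor proportional to $\rho$ (the mass of the "frozen" factor). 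Carrying the $e^{\rho(s-t)}$ weight through and taking the supremum over $t\in[0,\,T]$ yields an estimate of the form $\sup_t\wass{(\mathcal{T}f)(t)}{(\mathcal{T}g)(t)}\leq C(\rho,\,\Lip(\cP))(1-e^{-\rho T})\sup_t\wass{f(t)}{g(t)}$; since $1-e^{-\rho T}\to 0$ as $T\to 0$, for $T$ small enough $\mathcal{T}$ is a contraction and Banach's theorem gives a unique fixed point on $[0,\,T]$.

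The final step is global existence: the smallness of $T$ depends only on $\rho$ and $\Lip(\cP)$ (not on $f_0$ or on the interval already covered), because the mass $\rho$ is conserved along any mild solution; hence the local solution can be restarted from $f(T)\in\cM{\rho}$ on $[T,\,2T]$, and iterating covers all of $[0,\,+\infty)$, with uniqueness on each step patching together to global uniqueness. I expect the main obstacle to be the bookkeeping in the contraction estimate — in particular, correctly tracking the role of the conserved mass $\rho$ as the "size" of the frozen factor, and handling the Lipschitz constant $\Lip(\cP)$ (which, per Assumption~\ref{ass:Lip_P}, may depend on $\rho$) so that the resulting contraction constant is genuinely controlled by $1-e^{-\rho T}$; the positivity and mass-conservation checks for $\mathcal{T}f$, while essential, are routine once the weak formulation is set up carefully.
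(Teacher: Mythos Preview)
Your proposal is correct and follows essentially the same approach as the paper: define the integral operator from the mild formulation, verify it maps $X$ into itself (positivity, mass $\rho$, continuity in $t$), obtain the contraction estimate by splitting the bilinear term and using Assumption~\ref{ass:Lip_P} to control the Lipschitz constant of $(\w_\ast,\w^\ast)\mapsto\int_{I^2}\varphi(\w)\cP(\w_\ast\to\w\,\vert\,\w^\ast,\,\rho)\,d\w$, and then iterate the local result. The paper makes the constant explicit as $2\Lip(\cP)(1-e^{-\rho T})$ and notes that for $\Lip(\cP)\leq\tfrac{1}{2}$ no smallness of $T$ is needed, but otherwise your outline matches the paper's argument step for step.
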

\begin{proof}
We assume $\rho>0$ for, if $\rho=0$, the unique solution to~\eqref{eq:mild} is clearly $f\equiv 0$ and we are done. We fix $T>0$ and we introduce the operator $\cS$ defined on $X$ as
\begin{align*}
	\cS(f)(t,\,\w) &:= e^{-\rho t}f_0(\w) \\
	&\phantom{:=} +\int_0^t e^{\rho(s-t)}\iint_{I^4}\cP(\w_\ast\to\w\,\vert\,\w^\ast,\,\rho)f(s,\,\w_\ast)f(s,\,\w^\ast)\,d\w_\ast\,d\w^\ast\,ds.
\end{align*}
Then we restate~\eqref{eq:mild} as $f=\cS(f)$, meaning that solutions to~\eqref{eq:mild} are fixed points of $\cS$ on $X$. Now we claim that:
\begin{itemize}
\item $\cS(X)\subseteq X$. \\
Let $f\in X$. The non-negativity of $f_0$ (by assumption) and that of $\cP$ (by construction) give immediately that $\cS(f)(t)$ is a positive measure for all $t\in [0,\,T]$. Moreover, a simple calculation using property~\eqref{eq:tabgames} shows that the mass of $\cS(f)(t)$ is
$$ \int_{I^2}\cS(f)(t,\,\w)\,d\w=\rho e^{-\rho t}+\rho^2\int_0^t e^{\rho(s-t)}\,ds=\rho. $$
Therefore we conclude that $\cS(f)(t)\in \cM{\rho}$ for all $t\in [0,\,T]$.

To check the continuity of the mapping $t\mapsto\cS(f)(t)$ we define
$$ \cI(\varphi)(s):=\iint_{I^4}\left(\int_{I^2}\varphi(\w)\cP(\w_\ast\to\w\,\vert\,\w^\ast,\,\rho)\,d\w\right)
	f(s,\,\w_\ast)f(s,\,\w^\ast)\,d\w_\ast\,d\w^\ast, $$
for $\varphi\in C_{b,1}(I^2)\cap\Lip_1(I^2)$, then we take $t_1,\,t_2\in [0,\,T]$ with, say, $t_1\leq t_2$ and we compute:
\begin{align*}
	\wass{\cS(f)(t_1)}{\cS(f)(t_2)} &= \sup_{\varphi\in C_{b,1}(I^2)\cap\Lip_1(I^2)}\Biggl[\left(e^{-\rho t_2}-e^{-\rho t_1}\right)\int_{I^2}\varphi(\w)f_0(\w)\,d\w \\
	&\phantom{=} +\int_0^{t_2}e^{\rho(s-t_2)}\cI[\varphi](s)\,ds-\int_0^{t_1}e^{\rho(s-t_1)}\cI[\varphi](s)\,ds\Biggr] \\
	&\leq \rho\abs{e^{-\rho t_2}-e^{-\rho t_1}}\left(1+\rho\int_0^{t_1}e^{\rho s}\,ds\right)+\rho^2e^{-\rho t_2}\int_{t_1}^{t_2}e^{\rho s}\,ds \\
	&\leq 3\rho^2\abs{t_2-t_1},
\end{align*}
where we have used the Lipschitz continuity of the exponential function and the fact that $\abs{\cI(\varphi)(s)}\leq\rho^2$. Finally, this says that $\cS(f)\in X$.
\item If $T>0$ is sufficiently small then $\cS$ is a contraction on $X$. \\
Let $f,\,g\in X$, $\varphi\in\Lip_1(I^2)$, and define
\begin{equation}
	\psi(\w_\ast,\,\w^\ast):=\int_{I^2}\varphi(\w)\cP(\w_\ast\to\w\,\vert\,\w^\ast,\,\rho)\,d\w.
	\label{eq:psi}
\end{equation}
Notice preliminarily that, owing to Assumption~\ref{ass:Lip_P},
$$ \abs{\psi(\w_{\ast 2},\,\w^\ast_2)-\psi(\w_{\ast 1},\,\w^\ast_1)}\leq\Lip(\cP)\left(\abs{\w_{\ast 2}-\w_{\ast 1}}+\abs{\w^\ast_2-\w^\ast_1}\right). $$
Then:
\begin{align*}
	\int_{I^2} & \varphi(\w)(\cS(g)(t,\,\w)-\cS(f)(t,\,\w))\,d\w \\
	&= \int_0^t e^{\rho(s-t)}\iint_{I^4}\psi(\w_\ast,\,\w^\ast)(g_\ast g^\ast-f_\ast f^\ast)\,d\w_\ast\,d\w^\ast\,ds
\intertext{($f_\ast$, $f^\ast$ being shorthand for $f(s,\,\w_\ast)$, $f(s,\,\w^\ast)$ and analogously $g_\ast$, $g^\ast$)}
	&= \int_0^t e^{\rho(s-t)}\int_{I^2}\left(\int_{I^2}\psi(\w_\ast,\,\w^\ast)g_\ast\,d\w_\ast\right)(g^\ast-f^\ast)\,d\w^\ast\,ds \\
	&\phantom{=} +\int_0^t e^{\rho(s-t)}\int_{I^2}\left(\int_{I^2}\psi(\w_\ast,\,\w^\ast)f^\ast\,d\w^\ast\right)(g_\ast-f_\ast)\,d\w_\ast\,ds.
\end{align*}
Notice that
$$ \abs{\int_{I^2}\Bigl(\psi(\w_\ast,\,\w^\ast_2)-\psi(\w_\ast,\,\w^\ast_1)\Bigr)g_\ast\,d\w_\ast}\leq
	\rho\Lip(\cP)\abs{\w^\ast_2-\w^\ast_1} $$
and that the same holds also for the mapping $\w_\ast\mapsto\int_{I^2}\psi(\w_\ast,\,\w^\ast)f^\ast\,d\w^\ast$. Hence we continue the previous calculation by appealing to Remark~\ref{rem:wass_lambda} (at the right-hand side) and to the arbitrariness of $\varphi$ to discover:
\begin{align*}
	\wass{\cS(f)(t)}{\cS(g)(t)} &\leq 2\rho\Lip{(\cP)}\int_0^t e^{\rho(s-t)}\wass{f(s)}{g(s)}\,ds \\
	&\leq 2\Lip{(\cP)}\left(1-e^{-\rho t}\right)\sup_{t\in [0,\,T]}\wass{f(t)}{g(t)},
\end{align*}
whence finally
$$ \sup_{t\in [0,\,T]}\wass{\cS(f)(t)}{\cS(g)(t)}\leq 2\Lip{(\cP)}\left(1-e^{-\rho T}\right)\sup_{t\in [0,\,T]}\wass{f(t)}{g(t)}. $$
From this inequality we see that:
\begin{enumerate}[$-$]
\item if $\Lip{(\cP)}>\frac{1}{2}$ then it suffices to take $T<\frac{1}{\rho}\log{\frac{2\Lip{(\cP)}}{2\Lip{(\cP)}-1}}$ to obtain that $\cS$ is a contraction on $X$;
\item if $\Lip{(\cP)}\leq\frac{1}{2}$ then $\cS$ is a contraction on $X$ for every $T>0$.
\end{enumerate} 
\end{itemize}

Owing to the properties above, Banach fixed-point theorem implies the existence of a unique fixed point $f\in X$ of $\cS$ which solves~\eqref{eq:mild}. If $\Lip{(\cP)}\leq\frac{1}{2}$ this solution is global in time, whereas if $\Lip{(\cP)}>\frac{1}{2}$ it is only local. However, a simple continuation argument, based on taking $f(T)\in\cM{\rho}$ as new initial condition for $t=T$ and repeating the procedure above, shows that we can extend it uniquely on the interval $[T,\,2T]$. Proceeding in this way, we do the same on all subsequent intervals of the form $[kT,\,(k+1)T]$, $k=2,\,3,\,\dots$, and we obtain also in this case a global-in-time solution.
\end{proof}

\subsection{Continuous dependence}
By comparing two solutions to~\eqref{eq:mild} carrying the same mass $\rho$ we can establish:

\begin{theorem} 	\label{theo:cont_dep}
Fix $\rho\in [0,\,1]$ and two initial data $f_{01},\,f_{02}\in\cM{\rho}$. Let $f_1,\,f_2\in C([0,\,+\infty);\,\cM{\rho})$ be the corresponding solution to~\eqref{eq:mild}. Then:
$$ \wass{f_1(t)}{f_2(t)}\leq e^{-\rho(1-2\Lip(\cP))t}\wass{f_{01}}{f_{02}} \quad \forall\,t\geq 0. $$
\end{theorem}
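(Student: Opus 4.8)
The plan is to work directly with the mild formulation~\eqref{eq:mild} and estimate the $W_1$ distance between the two solutions at time $t$ by a Grönwall-type argument, exactly mirroring the contraction estimate already carried out in the proof of Theorem~\ref{theo:exist_uniqueness}, but keeping track of the time dependence rather than passing to the supremum. As before, the case $\rho=0$ is trivial (both solutions vanish), so I would assume $\rho>0$.

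First I would write, for a test function $\varphi\in\Lip_1(I^2)$ and with $\psi$ as in~\eqref{eq:psi}, the difference
$$ \int_{I^2}\varphi(\w)(f_2(t,\,\w)-f_1(t,\,\w))\,d\w = e^{-\rho t}\int_{I^2}\varphi(\w)(f_{02}(\w)-f_{01}(\w))\,d\w + \int_0^t e^{\rho(s-t)}\iint_{I^4}\psi(\w_\ast,\,\w^\ast)\bigl((f_2)_\ast(f_2)^\ast-(f_1)_\ast(f_1)^\ast\bigr)\,d\w_\ast\,d\w^\ast\,ds, $$
then split the bilinear difference as $(f_2)_\ast(f_2)^\ast-(f_1)_\ast(f_1)^\ast = (f_2)_\ast\bigl((f_2)^\ast-(f_1)^\ast\bigr)+(f_1)^\ast\bigl((f_2)_\ast-(f_1)_\ast\bigr)$, precisely as in the earlier proof. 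Using that the partial integrals $\w^\ast\mapsto\int_{I^2}\psi(\w_\ast,\,\w^\ast)(f_2)_\ast\,d\w_\ast$ and $\w_\ast\mapsto\int_{I^2}\psi(\w_\ast,\,\w^\ast)(f_1)^\ast\,d\w^\ast$ are Lipschitz with constant $\rho\Lip(\cP)$ (by Assumption~\ref{ass:Lip_P}), Remark~\ref{rem:wass_lambda} gives, after taking the supremum over $\varphi$,
$$ \wass{f_1(t)}{f_2(t)}\leq e^{-\rho t}\wass{f_{01}}{f_{02}} + 2\rho\Lip(\cP)\int_0^t e^{\rho(s-t)}\wass{f_1(s)}{f_2(s)}\,ds. $$

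Next I would close the estimate with Grönwall's lemma. Setting $g(t):=e^{\rho t}\wass{f_1(t)}{f_2(t)}$, the inequality becomes $g(t)\leq\wass{f_{01}}{f_{02}}+2\rho\Lip(\cP)\int_0^t g(s)\,ds$, so Grönwall yields $g(t)\leq\wass{f_{01}}{f_{02}}\,e^{2\rho\Lip(\cP)t}$, i.e.
$$ \wass{f_1(t)}{f_2(t)}\leq e^{-\rho(1-2\Lip(\cP))t}\wass{f_{01}}{f_{02}}, $$
which is the claimed bound. (Note the estimate does not require $\Lip(\cP)\leq\tfrac12$; when $\Lip(\cP)>\tfrac12$ the exponent is positive and the inequality is still true, merely not contractive, consistently with the local-in-time theory in Theorem~\ref{theo:exist_uniqueness}.)

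The argument is essentially routine given the machinery already in place; the only point requiring a little care is the measure-theoretic justification that the inner integrals against $f_1, f_2\in\cM{\rho}$ inherit the Lipschitz constant from $\psi$ — but this is exactly the computation displayed in the proof of Theorem~\ref{theo:exist_uniqueness} (the bound $\abs{\int_{I^2}(\psi(\w_\ast,\,\w^\ast_2)-\psi(\w_\ast,\,\w^\ast_1))(f_i)_\ast\,d\w_\ast}\leq\rho\Lip(\cP)\abs{\w^\ast_2-\w^\ast_1}$), so no new obstacle arises. The mildly delicate bookkeeping step is simply making sure the factor $e^{\rho(s-t)}$ inside the time integral is handled correctly when applying Grönwall, which is why I introduce the auxiliary function $g$.
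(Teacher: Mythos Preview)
Your proposal is correct and follows essentially the same route as the paper's own proof: write the difference via the mild formulation~\eqref{eq:mild}, split the bilinear term, use the $\rho\Lip(\cP)$-Lipschitz estimate on the partial integrals of $\psi$ to reach the integral inequality, and close with Gronwall. The only cosmetic difference is that you spell out the substitution $g(t)=e^{\rho t}\wass{f_1(t)}{f_2(t)}$ before invoking Gronwall, whereas the paper simply says ``the thesis follows by applying Gronwall's inequality.''
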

\begin{proof}\footnote{Throughout the proof, we will adopt the shorthand notations $f_\ast=f(t,\,\w_\ast)$, $f^\ast=f(t,\,\w^\ast)$.}
Let $\varphi\in\Lip_1(I^2)$. Using~\eqref{eq:mild} we compute:
\begin{multline*}
	\int_{I^2}\varphi(\w)(f_2(t,\,\w)-f_1(t,\,\w))\,d\w\leq e^{-\rho t}\wass{f_{01}}{f_{02}} \\
	+\int_0^t e^{\rho(s-t)}\iint_{I^4}\psi(\w_\ast,\,\w^\ast)\left(f_{2\ast}f_2^\ast-f_{1\ast}f_1^\ast\right)\,d\w_\ast\,d\w^\ast,
\end{multline*}
where $\psi(\w_\ast,\,\w^\ast)$ is the function~\eqref{eq:psi} defined in the proof of Theorem~\ref{theo:exist_uniqueness}. By means of analogous calculations we discover
\begin{align*}
	\wass{f_1(t)}{f_2(t)} &\leq e^{-\rho t}\wass{f_{01}}{f_{02}} \\
	&\phantom{\leq} +2\rho\Lip(\cP)\int_0^t e^{\rho(s-t)}\wass{f_1(s)}{f_2(s)}\,ds,
\end{align*}
whence the thesis follows by applying Gronwall's inequality.
\end{proof}

\begin{remark}
From Theorem~\ref{theo:cont_dep} we infer that if $\Lip(\cP)<\frac{1}{2}$ then
$$ \lim_{t\to+\infty}\wass{f_1(t)}{f_2(t)}=0. $$
That is, all solutions to~\eqref{eq:mild} approach asymptotically one another. This fact preludes to the result about the equilibria of~\eqref{eq:mild} proved in Theorem~\ref{theo:attractiveness} below.
\end{remark}

\subsection{Asymptotic analysis}
\label{app:equilibria}
In this section we study the asymptotic trends of~\eqref{eq:kinetic_model}, in particular we give sufficient conditions for the existence, uniqueness, and attractiveness of equilibria. It is worth stressing that equilibria of the kinetic model are at the basis of the computation of fundamental and risk diagrams of traffic discussed in Section~\ref{sec:simulations}.

Besides the methods presented here, we refer the reader to~\cite{herty2010KRM} and references therein for other ways to study the trend towards equilibrium of space homogeneous kinetic traffic models and for the identification of exact or approximated steady states.

\subsubsection{Existence and uniqueness of equilibria}
Equilibria of~\eqref{eq:kinetic_model} are time-independent distribution functions $f_\infty=f_\infty(\w)\in\cM{\rho}$ such that
$$ Q^+(f_\infty,\,f_\infty)-\rho f_\infty=0. $$
If $\rho=0$ then it is clear that the unique equilibrium is the trivial distribution function $f_\infty\equiv 0$. Assuming instead $\rho>0$, from the previous equation we see that equilibria satisfy
\begin{equation}
	f_\infty=\frac{1}{\rho}Q^+(f_\infty,\,f_\infty),
	\label{eq:equilibria}
\end{equation}
i.e., they are fixed points of the mapping $f\mapsto\frac{1}{\rho}Q^+(f,\,f)$. The next theorem gives a sufficient condition for their existence and uniqueness, relying on the Banach contraction principle in $\cM{\rho}$.

\begin{theorem} \label{theo:equilibria}
Let $\Lip(\cP)<\frac{1}{2}$. For all $\rho\in [0,\,1]$,~\eqref{eq:kinetic_model} admits a unique equilibrium distribution function $f_\infty\in\cM{\rho}$.
\end{theorem}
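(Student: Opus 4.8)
\emph{Proof proposal.} The plan is to apply the Banach contraction principle to the mapping $\mathcal{T}\colon f\mapsto\frac{1}{\rho}Q^{+}(f,\,f)$ on the metric space $(\cM{\rho},\,W_1)$: by~\eqref{eq:equilibria} its fixed points are precisely the equilibria of~\eqref{eq:kinetic_model} with mass $\rho$. The case $\rho=0$ is disposed of at once, for then $\cM{0}=\{0\}$ and the only equilibrium is $f_\infty\equiv 0$; so in what follows $\rho>0$.

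First I would check that $\mathcal{T}$ maps $\cM{\rho}$ into itself. Nonnegativity of $\mathcal{T}f$ is inherited from that of $\cP$ and of $f$; and integrating $Q^{+}(f,\,f)$ over $I^2$ and using the normalisation~\eqref{eq:tabgames} gives $\int_{I^2}Q^{+}(f,\,f)\,d\w=\rho^2$, whence $\int_{I^2}\mathcal{T}f\,d\w=\rho$. I would also record that $(\cM{\rho},\,W_1)$ is a complete metric space: $I^2$ is compact and all measures of $\cM{\rho}$ carry the same mass, so the family is tight and a $W_1$-Cauchy sequence converges within $\cM{\rho}$ (alternatively one may invoke the standard completeness of Wasserstein spaces over a compact metric space).

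The heart of the matter is the contraction estimate, which is a stationary analogue of the computation already carried out in the proof of Theorem~\ref{theo:exist_uniqueness}. Given $f,\,g\in\cM{\rho}$ and $\varphi\in\Lip_1(I^2)$, set $\psi(\w_\ast,\,\w^\ast):=\int_{I^2}\varphi(\w)\cP(\w_\ast\to\w\,\vert\,\w^\ast,\,\rho)\,d\w$ as in~\eqref{eq:psi}; by Assumption~\ref{ass:Lip_P}, $\psi$ is Lipschitz in each argument with constant $\Lip(\cP)$, uniformly with respect to the other. Using the bilinear splitting
$$ g_\ast g^\ast-f_\ast f^\ast=\left(\int_{I^2}\psi\,g_\ast\,d\w_\ast\right)(g^\ast-f^\ast)+\left(\int_{I^2}\psi\,f^\ast\,d\w^\ast\right)(g_\ast-f_\ast) $$
(with $f_\ast=f(\w_\ast)$, $g^\ast=g(\w^\ast)$, and so on), I would bound $\int_{I^2}\varphi\,(\mathcal{T}g-\mathcal{T}f)\,d\w=\frac{1}{\rho}\iint_{I^4}\psi\,(g_\ast g^\ast-f_\ast f^\ast)\,d\w_\ast\,d\w^\ast$. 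The functions $\w^\ast\mapsto\frac{1}{\rho}\int_{I^2}\psi\,g_\ast\,d\w_\ast$ and $\w_\ast\mapsto\frac{1}{\rho}\int_{I^2}\psi\,f^\ast\,d\w^\ast$ are Lipschitz with constant at most $\Lip(\cP)$ --- the prefactor $\frac{1}{\rho}$ absorbing the mass $\rho$ of $f$ and $g$ --- so by Remark~\ref{rem:wass_lambda} each of the two resulting terms is at most $\Lip(\cP)\wass{f}{g}$. Taking the supremum over $\varphi\in\Lip_1(I^2)$ yields
$$ \wass{\mathcal{T}f}{\mathcal{T}g}\leq 2\Lip(\cP)\wass{f}{g}, $$
and since $\Lip(\cP)<\frac{1}{2}$ the factor $2\Lip(\cP)$ is strictly less than $1$: $\mathcal{T}$ is a contraction, and Banach's theorem provides the unique fixed point $f_\infty\in\cM{\rho}$.

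The main obstacle is bookkeeping rather than conceptual: one has to make sure that the $\frac{1}{\rho}$ in the definition of $\mathcal{T}$ exactly cancels the single mass factor $\rho$ produced when a measure of mass $\rho$ is integrated against the (uniformly Lipschitz) kernels, so that the contraction constant comes out as $2\Lip(\cP)$ and not a $\rho$-dependent quantity that might fail to be $<1$. The remaining ingredients --- the self-mapping property and the completeness of $\cM{\rho}$ --- are routine and parallel the corresponding parts of the proof of Theorem~\ref{theo:exist_uniqueness}.
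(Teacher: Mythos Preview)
Your proposal is correct and follows essentially the same route as the paper: show that $f\mapsto\frac{1}{\rho}Q^{+}(f,\,f)$ is a self-map of $\cM{\rho}$, use the bilinear splitting together with the function $\psi$ of~\eqref{eq:psi} and Remark~\ref{rem:wass_lambda} to obtain the contraction constant $2\Lip(\cP)$, and conclude by Banach's fixed-point theorem. The only cosmetic difference is that you explicitly record the $\rho=0$ case and the completeness of $(\cM{\rho},\,W_1)$, whereas the paper leaves these implicit; your displayed ``splitting'' equation is notationally a little loose (the algebraic identity is $g_\ast g^\ast-f_\ast f^\ast=g_\ast(g^\ast-f^\ast)+(g_\ast-f_\ast)f^\ast$, with the integrals appearing only after pairing with $\psi$), but the intended computation is exactly the paper's.
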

\begin{proof}
Throughout the proof we will assume $\rho>0$.

The operator $\frac{1}{\rho}Q^+$ maps $\cM{\rho}$ into itself, in fact, given $f\in\cM{\rho}$, it is clear that $\frac{1}{\rho}Q^+(f,\,f)$ is a positive measure and moreover
$$ \int_{I^2}\frac{1}{\rho}Q^+(f,\,f)(\w)\,d\w=\frac{1}{\rho}\iint_{I^4}f(\w_\ast)f(\w^\ast)\,d\w_\ast\,d\w^\ast=\rho. $$

Moreover we claim that, under the assumptions of the theorem, it is a contraction on $\cM{\rho}$. Indeed, let $f,\,g\in\cM{\rho}$ and $\varphi\in\Lip_1(I^2)$, then:
\begin{align*}
	\frac{1}{\rho}\int_{I^2}\varphi(\w) & \left(Q^+(g,\,g)(\w)-Q^+(f,\,f)(\w)\right)\,d\w \\
	&= \frac{1}{\rho}\int_{I^2}\left(\int_{I^2}\psi(\w_\ast,\,\w^\ast)g(\w_\ast)\,d\w_\ast\right)(g(\w^\ast)-f(\w^\ast))\,d\w^\ast \\
	&\phantom{=} +\frac{1}{\rho}\int_{I^2}\left(\int_{I^2}\psi(\w_\ast,\,\w^\ast)f(\w^\ast)\,d\w^\ast\right)(g(\w_\ast)-f(\w_\ast))\,d\w_\ast,
\end{align*}
where $\psi$ is the function~\eqref{eq:psi}. From the proof of Theorem~\ref{theo:exist_uniqueness}, we know that both mappings $\w^\ast\mapsto\int_{I^2}\psi(\w_\ast,\,\w^\ast)g(\w_\ast)\,d\w_\ast$ and $\w_\ast\mapsto\int_{I^2}\psi(\w_\ast,\,\w^\ast)f(\w^\ast)\,d\w^\ast$ are Lipschitz continuous with Lipschitz constant bounded by $\rho\Lip(\cP)$, hence from the previous expression we deduce
$$ \frac{1}{\rho}\int_{I^2}\varphi(\w)\left(Q^+(g,\,g)(\w)-Q^+(f,\,f)(\w)\right)\,d\w\leq 2\Lip(\cP)\wass{f}{g}. $$
Taking the supremum over $\varphi$ at the left-hand side yields finally
$$ \wass{\frac{1}{\rho}Q^+(f,\,f)}{\frac{1}{\rho}Q^+(g,\,g)}\leq 2\Lip(\cP)\wass{f}{g}, $$
which, in view of the hypothesis $\Lip(\cP)<\frac{1}{2}$, implies that $\frac{1}{\rho}Q^+$ is a contraction. Banach fixed-point theorem gives then the thesis.
\end{proof}

\subsubsection{Attractiveness of equilibria}
Under the same assumption of Theorem~\ref{theo:equilibria}, the equilibrium distribution function $f_\infty$ is globally attractive. This means that all solutions to~\eqref{eq:cauchy} converge to $f_\infty$ asymptotically in time. The precise statement of the result is as follows:

\begin{theorem} \label{theo:attractiveness}
Let $\Lip(\cP)<\frac{1}{2}$. Any solution $f\in C([0,\,+\infty);\,\cM{\rho})$ of~\eqref{eq:cauchy} converges to $f_\infty$ in the $W_1$ metric when $t\to+\infty$.
\end{theorem}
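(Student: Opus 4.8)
The plan is to reduce the claim to the exponential contraction estimate of Theorem~\ref{theo:cont_dep}, by recognising the equilibrium $f_\infty$ as a stationary solution of the mild problem~\eqref{eq:mild}. The degenerate case $\rho=0$ is trivial, since then $f\equiv f_\infty\equiv 0$; so throughout I assume $\rho>0$.

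First I would invoke Theorem~\ref{theo:equilibria}: since $\Lip(\cP)<\tfrac12$, there is a unique $f_\infty\in\cM{\rho}$ satisfying $Q^{+}(f_\infty,\,f_\infty)=\rho f_\infty$. The key observation is that the constant trajectory $t\mapsto f_\infty$ belongs to $C([0,\,+\infty);\,\cM{\rho})$ and is a mild solution of~\eqref{eq:cauchy} with initial datum $f_\infty$. Indeed, substituting the constant $f_\infty$ into the right-hand side of~\eqref{eq:mild} and using $Q^{+}(f_\infty,\,f_\infty)=\rho f_\infty$ gives
\[
e^{-\rho t}f_\infty+\int_0^t e^{\rho(s-t)}\,\rho f_\infty\,ds
= e^{-\rho t}f_\infty+\bigl(1-e^{-\rho t}\bigr)f_\infty=f_\infty .
\]
Hence $f_\infty$, viewed as a time-independent trajectory, is precisely the (unique, by Theorem~\ref{theo:exist_uniqueness}) mild solution issued from the datum $f_\infty\in\cM{\rho}$.

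Next I would apply Theorem~\ref{theo:cont_dep} to the two solutions $f_1:=f$, with initial datum $f_0$, and $f_2:=f_\infty$, with initial datum $f_\infty$: both lie in $C([0,\,+\infty);\,\cM{\rho})$, hence share the same mass $\rho$, and a solution of~\eqref{eq:cauchy} solves~\eqref{eq:mild} by construction. The estimate of that theorem then reads
\[
\wass{f(t)}{f_\infty}\leq e^{-\rho(1-2\Lip(\cP))\,t}\,\wass{f_0}{f_\infty}\qquad\forall\,t\geq 0 .
\]
Because $\Lip(\cP)<\tfrac12$ yields $1-2\Lip(\cP)>0$ and $\rho>0$, the exponential prefactor decays to $0$ as $t\to+\infty$; since $\wass{f_0}{f_\infty}$ is finite, the right-hand side vanishes in the limit, which gives $\wass{f(t)}{f_\infty}\to 0$ and proves the theorem.

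The only point requiring care is the first step: one must make sure that the equilibrium delivered abstractly by the Banach fixed-point argument in $\cM{\rho}$ genuinely coincides with a stationary mild solution on $[0,\,+\infty)$, so that Theorem~\ref{theo:cont_dep} legitimately applies to the pair $(f,\,f_\infty)$. Once this identification is in place the convergence is automatic and costs no further estimate; in particular it is not necessary to iterate the map $f\mapsto\frac{1}{\rho}Q^{+}(f,\,f)$ directly, which would only yield convergence along the discrete times $kT$ and would need an additional equicontinuity-in-time argument to be upgraded to the full limit $t\to+\infty$.
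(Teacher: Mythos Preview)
Your proposal is correct and follows exactly the paper's approach: the paper's proof simply notes that $f_\infty$ is the solution to~\eqref{eq:cauchy} with initial datum $f_\infty$ itself, invokes Theorem~\ref{theo:cont_dep} to get $\wass{f(t)}{f_\infty}\leq e^{-\rho(1-2\Lip(\cP))t}\wass{f_0}{f_\infty}$, and concludes. Your version is more detailed in verifying that the constant trajectory $t\mapsto f_\infty$ is indeed a mild solution and in handling the degenerate case $\rho=0$, but the underlying idea is identical.
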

\begin{proof}
Set $f_0(\w)=f(0,\,\w)\in\cM{\rho}$. Since $f_\infty$ is the solution to~\eqref{eq:cauchy} when the initial datum is $f_\infty$ itself, Theorem~\ref{theo:cont_dep} implies
$$ \wass{f(t)}{f_\infty}\leq e^{-\rho(1-2\Lip(\cP))t}\wass{f_0}{f_\infty} $$
and the thesis follows.
\end{proof}

\bibliographystyle{amsplain}
\bibliography{FpTa-safety_traffic}

\end{document}